\documentclass[12pt,letterpaper]{preprint}

\usepackage{graphicx}
\usepackage{comment}

\title{Stable minimum principles for scattering states}
\author{Scott Lawrence and Yukari Yamauchi}
\report{LA-UR-26-23902}

\begin{document}
\begin{titleblock}
	Scott Lawrence\footnote{Electronic address: \texttt{srlawrence@lanl.gov}} and Yukari Yamauchi\footnote{Electronic address: \texttt{yyamauchi@lanl.gov}}\\[8pt]
\normalsize\textit{Theoretical Division, Los Alamos National Laboratory\\Los Alamos, NM 87545, USA}
\end{titleblock}
\begin{abstract}
	Quantum-mechanical scattering states are energy eigenstates obeying particular boundary conditions, whose behavior at infinity encodes the S-matrix which defines the outcoming of scattering experiments. With an eye toward numerical algorithms for computing nonrelativistic S-matrices, we present a family of \emph{stable} minimum principles for scattering states. States that approximately satisfy these minimum principles are shown to have a bounded difference with the true scattering states. These minimum principles and stability estimates can be used to obtain rigorous bounds on scattering amplitudes. We show that these minimum principles are applicable to momentum-dependent potentials, long-range (Coulomb) interactions, and elastic or inelastic scattering of bound states.
\end{abstract}
{\renewcommand{\baselinestretch}{0.80}\tableofcontents}
\section{Introduction}
Scattering in a quantum-mechanical system is summarized by the S-matrix. Heuristically, a scattering process begins (in the center-of-mass frame) with a state $|p,-p\rangle$, consisting of two well-separated particles, in approximate momentum eigenstates, one with momentum $p$ and the other with momentum $-p$. Time-evolution maps this to potentially complicated intermediate states in which the particles are interacting, but in the long-time limit we again obtain well-separated particles, now with momenta $k_1,\ldots,k_n$. In the case where the initial states were bound states, they may break up, leading to $n > 2$. The transformation from initial to final states is necessarily unitary, and we call this the action of the S-matrix. The task of computing a scattering process may be phrased as the task of determining elements of the S-matrix.

Computationally, the S-matrix may be accessed without referring to the time-evolution operator at all. A \emph{scattering state} is a Hamiltonian eigenstate with particular asymptotic behavior. In the case of one-body scattering, a scattering state $\psi$ may be decomposed, far from the origin, into a superposition of an ingoing plane wave and many outgoing scattered waves:
\begin{equation}\label{eq:1d-asymptotics}
	\psi(x) = e^{ikx} + \frac{e^{ikr}}{r} f(\theta,\phi) + O(r^{-2})
	\text.
\end{equation}
The amplitudes of the outgoing waves, given by the angle-dependent function $f(\theta,\phi)$, encode elements of the S-matrix.

What do time-independent scattering states have to do with the time-dependent process of scattering? Consider the scattering of a Gaussian wavepacket, of width $\sigma$ and mean-position $x_0$, off of a potential localized near the origin. For a sufficiently broad packet, the width $\sigma$ can be treated as constant, and so we will neglect its time-dependence. The position changes according to $x_0(t) = x_0(0) + t \frac k M$, where $M$ is the mass of the particle. At both early and late times, we have $x_0 \gg \sigma$, and the wavepacket is not interacting with the potential. At intermediate times, the wavepacket does interact with the potential, but for a sufficiently broad wavepacket there are also substantial regions where the evolution of the wavepacket is given by the free evolution of (ingoing and outgoing) plane waves. Therefore, at positions $|x| \ll \sigma$, the wavepacket approximates the time-independent scattering state.

This paper is concerned not with the scattering states themselves, but rather with approximations to scattering states. Let $\tilde\psi$ be a state which satisfies the Schr\"odinger equation only approximately, but has the correct asymptotic behavior for a scattering state: in the case of scattering from a central potential, let $\tilde\psi$ have the asymptotic behavior of \eqref{1d-asymptotics}. Given such a state, we can read off an approximation to the S-matrix from the behavior of the outgoing wave (that is, the function $f(\cdot)$). There are then two natural questions to ask about this approximation to the S-matrix:
\begin{enumerate}
	\item Is this approximation guaranteed to converge to the true S-matrix as $\tilde\psi$ obeys the Schr\"odinger equation more precisely?
	\item Is there a way in which the ``size'' of the violation of Schr\"odinger's equation can be used to set a bound on the size of the error in this approximated S-matrix?
\end{enumerate}
This paper answers both questions affirmatively.

We begin by presenting a minimum principle for scattering states in \secref{states}, broadly similar to Kohn's variational principle~\cite{kohn1948variational}. We show for a wide class of potentials (including all bounded potentials), that the error on the extracted S-matrix can be linearly related to the $L_1$ norm of the Schrodinger violation $(E-\hat H)\tilde\psi$.

Our motivation for this minimum principle is that it provides a computational tool for approximating scattering cross sections in many-body systems. Modern computational methods, including neural quantum states~\cite{Carleo:2016svm}, allow the algorithmic optimization of a wide variety of functionals over the space of wavefunctions. The variational principle we describe identifies such a functional $L[\tilde\psi]$, and accompanies it with the following guarantee: as $L[\tilde\psi]$ approaches $0$, the asymptotic behavior of $\tilde\psi$ will approach that of the true scattering state. Moreover, denoting (some measure of) the error $E$, there exists a constant $C$ such that $E \le C L[\tilde\psi]$ for any $\tilde\psi$.

The coefficient $C$ that characterizes the stability of the scattering state depends on the precise choice of the potential. We discuss this coefficient in~\secref{stability}. For narrow classes of potentials we are able to provide estimates of $C$. In general we are unable to show that any universal bound exists. When available, such estimates make the results of numerical optimization of $L[\tilde\psi]$ more immediately useful. Instead of a sequence of approximations which is only known to be asymptotically correct, each individual approximation is now translated into trustworthy upper and lower bounds on the scattering amplitude---and we still have the guarantee that these bounds will become arbitrarily tight as the approximations are improved. This is similar to the bounds on ground states and real-time dynamics that are provided by various incarnations of the quantum-mechanical bootstrap~\cite{Han:2020bkb,Berenstein:2021dyf,Lawrence:2024mnj,Laliberte:2025xvk}, although of course the methods differ radically.

The remainder of this paper is dedicated to various extensions of these key results. Motivated by common potentials used in nuclear theory, we address momentum-dependent terms in \secref{momentum}, and the Coulomb potential in \secref{coulomb}. We show how to generalize to many-body scattering states (the scattering of two bound states) in \secref{many}, and in the same section discuss the case where bound-state wavefunctions are themselves only approximately known. Finally, in \secref{discussion} we conclude with a discussion of the applications of these results.

\subsection{Prior work}\label{sec:prior}

Variational principles, and related upper and lower bounds, for scattering amplitudes have been considered before. A comprehensive review is not possible here, and we only summarize a few of the more closely related approaches that have been taken, in order to give the reader a sense of the relation between this work and prior approaches. Another partial review is available in~\cite{moiseiwitsch2004variational}. Our paper does not assume any familiarity with this prior work (much of which has remained rather obscure); the historically indifferent reader may safely skip this discussion.

Variational principles abound, including for scattering problems, and continue to be of use in modern calculations. The Lippmann-Schwinger equations come from one such variational principle~\cite{Lippmann:1950zz}, as does Kohn's variational method~\cite{kohn1948variational}. Their numerical application is difficult. Many variational methods suffer from convergence problems, often attributed to the approximation of an infinite-dimensional Hilbert space (that of wavefunctions on a noncompact space) by a finite-dimensional Hilbert space suitable for computational purposes. These convergence problems are also related to the choice of function being minimized. As explored in \appref{counterexamples}, several apparently natural variational principles (including those based on the $L_2$ norm of the Schr\"odinger violation) are not stable, in the sense that the variational principle can be approximately satisfied without the asymptotic behavior of the scattering state being approximately correct.

These convergence difficulties are naturally fixed (as several authors~\cite{spruch1958bounds,shimamura1968variational,bardsley1972minimum} have pointed out) when the variational principle is accompanied by a method for obtaining rigorous upper and lower bounds. It is typically the case that these bounds can be seen to approach each other as the variational state approaches the true state, and therefore we are assured not only that our answer is trustworthy, but also that more computation will eventually yield any desired precision.

Variational principles tied to rigorous upper and lower bounds on amplitudes are less commonly used, but a few have been developed. We will review them here in rough chronological order, and without any claim to be comprehensive. The first such bounds are due to Kato~\cite{kato1951upper}, and (somewhat simplified) begin by minimizing a weighted $L_2$ norm:
\begin{equation}\label{eq:l2-weighted}
	L_2[\tilde\psi; \rho] := \int |(E - \hat H)\tilde\psi|^2 \rho^{-1}
	\text.
\end{equation}
Here $\rho$ is a somewhat arbitrarily chosen weight function which serves to alleviate the convergence problems discussed in the appendix. In order to relate this quantity to bounds on scattering amplitudes, Kato's method requires the user to solve---or at least rigorously bound the spectrum of---a related eigenvalue problem. Later authors alleviated the need for such a solution at the cost of dramatically weakened bounds~\cite{spruch1958bounds}, but also pointed out that there appears to be no general method for selecting a suitable $\rho$~\cite{shimamura1968variational}. This family of methods was also limited in practice to scattering from a single central potential, and equivalent scattering problems. The \emph{least-norm method}~\cite{kanellopoulos1975least} provides an extension of Kato's method to the case of multiple channels.

The minimum-variance method of Bardsley et al~\cite{bardsley1972minimum} continues to consider the weighted $L_2$ norm \eqref{l2-weighted} as a figure of merit for variational states, albeit with the suggestion that it be approximated, in essence, by Monte Carlo integration. However, the minimum-variance method departs from Kato's approach when deriving the bounds. The error in phase shifts was shown by Kato to be given (in the context of scattering from a central potential) by
\begin{equation}
	\Delta_\eta := k\left|\cot \eta - \cot \tilde\eta\right|
	=
	\left|\int_0^\infty \psi (E - \hat H) \tilde \psi \,d r\right|
	\text,
\end{equation}
where $k$ is the incident wavenumber, $\eta$ the true phaseshift, and $\tilde\eta$ the approximated phaseshift from $\tilde\psi$. Bardsley et al point out that, by the Schwarz inequality, this implies the bound
\begin{equation}
	\Delta_\eta
	\le \left(\max_r |\psi(r)|\right)
	\left[\left(\int_0^\infty \rho(r) dr\right)
	\left(\int_0^\infty \rho^{-1}(r)\left| (E - \hat H)\tilde\psi\right|^2dr\right)
	\right]^{1/2}
	\text.
\end{equation}
for any choice of strictly positive weight function $\rho(r)$.

This approach does not require solving an alternative eigenvalue problem---a substantial improvement in ease-of-use over Kato's method---but does introduce the need to calculate, or at least estimate, the maximum magnitude of the true scattering wavefunction. Such an estimate is also needed by our work; unfortunately in~\cite{bardsley1972minimum} such an estimate is provided only for potentials which are sufficiently weak.

We conclude with the works of Darewych, Pooran, and Sokoloff~\cite{darewych1978estimating,darewych1979first,darewych1980upper}, which further improve on the minimum-variance method in a few ways. The first work~\cite{darewych1978estimating} takes the step of relying more heavily on the $L_1$ norm itself, which is the focus of the current work as well. Still restricted to s-wave scattering from a central potential, this is
\begin{equation}
	L_1[\tilde\psi] = \int_0^\infty |(E - \hat H)\tilde\psi|dr
	\text.
\end{equation}
In terms of the $L_1$ norm, the authors reduce the task of obtaining rigorous upper and lower bounds to the phase shift, to the problem of estimating
\begin{equation}
	E_r := \int_0^\infty (\tilde\psi - \psi) (E - \hat H)\tilde\psi\,dr
	\le \left(\max_r \left|\tilde\psi(r) - \psi(r)\right|\right) L_1[\tilde\psi]
	\text.
\end{equation}
As in the work of Bardsley et al~\cite{bardsley1972minimum}, the bound then obtained on $\max |\tilde\psi - \psi|$ only holds for sufficiently weak potentials. This restriction is lifted in~\cite{darewych1979first}, but only by a seminumerical procedure specific to one-dimensional problems: namely, the user of the method must exactly solve the case of a nearby, piecewise-constant potential. And finally, in~\cite{darewych1980upper}, there appear formal (albeit impractical) bounds for the case of multiple dynamical particles.

In the present work we will make use exclusively of the $L_1$ norm. As previous authors did, we find that estimates of the maximum amplitude of the true scattering state are critical. Our emphasis is on providing the basis for a computational method that works for any potential, and will generalize to many-body scattering problems (with inelastic nuclear scattering as a central example). With that in mind, \secref{stability} discusses the circumstances under which $V$-independent bounds on the $L_1$ norm are available. Later sections generalize the method to a wide variety of potentials and scattering problems for which these approaches had not been previously considered.

We will assume that the $L_1$ norm may be satisfactorily approximated by Monte Carlo methods. That is, whereas previous authors were often concerned with being able to rigorously evaluate (or at least bound) the $L_1$ of the Schr\"odinger norm, we will limit ourselves to obtaining bounds on the scattering amplitudes \emph{under the assumption} that trustworthy information about the $L_1$ norm is available. We anticipate that such bounds will be useful to variational Monte Carlo algorithms (similar to~\cite{Flores:2022foz}), which minimize the $L_1$ norm by stochastic gradient descent or a related algorithm~\cite{kingma2014adam}. Taking this perspective allows us to consider methods useful to many-body systems.

\subsection{Notation and conventions}\label{sec:notation}
In line with the convention most common in physics, the complex conjugate of a complex c-number $z$ is written $z^*$; the Hermitian adjoint of an operator $A$ is written $A^\dagger$. Operators on a Hilbert space are denoted with a hat---for example a Hamiltonian will be written $\hat H$.

The $L_p$ norm of a function $f$ is denoted $\|f\|_p$, and is defined for $p=\infty$ as $\|f\|_\infty = \sup_x |f(x)|$. The $L_2$ norm, or magnitude, of a spatial vector $v$ will be denoted $|v|$ or, where unambiguous, simply $v$. When considering scattering from a central potential we will refer to the radial coordinate as $r = |x|$. Therefore the asymptotic behavior of a scattered s-wave is written $\frac{e^{ikr}}{r}$, in lieu of the more explicit but less readabable $\frac{e^{i|k||x|}}{|x|}$. Similarly the unit vector with the same direction as $x$ is written either $\frac{x}{|x|}$ or $\frac{x}{r}$.

To avoid a proliferation of diacritics, we do not generally expressly indicate which objects are vectors (or otherwise have multiple components). The $L_p$ norms of vector-valued functions are defined by
\begin{equation}
	\|f\|_p = \left[\int \sum_i (f_i)^p\right]^{1/p}\text.
\end{equation}

In this work wavefunctions are canonically normalized so that the ingoing plane wave is precisely $e^{ikx}$. Therefore, wavefunctions are dimensionless, unless an ingoing state is a bound state, in which case the wavefunction inherits the dimension of the bound-state wavefunction. We set $\hbar$ to unity throughout, so that momentum has units of inverse length. The $L_1$ norm of the Schr\"odinger violation of a state $\psi$, written $\|(E-\hat H)\psi\|_1$, will appear frequently, and has (in three spatial dimensions) units of length times inverse mass.

\section{Approximate scattering states}\label{sec:states}

To begin with, we consider scattering of a single particle, with no internal degrees of freedom, from a central potential in $3$ spatial dimensions. The potential $V(x)$ is assumed to decay sufficiently rapidly, which we take to mean $V(x) = O(e^{-r/R})$ for some distance $R$, although this constraint can be weakened to a dimension-dependent power-law decay. The Hamiltonian of this system is
\begin{equation}
	\hat H = -\frac{1}{2M} \hat \nabla^2 + V(\hat x)\text,
\end{equation}
where $M$ gives the mass of the particle.

For any incident momentum $k$ and outgoing momentum $p$, the S-matrix element $\langle p|\hat S|k\rangle$ encodes the amplitude for an incident particle of momentum $k$ to be scattered into momentum $p$. Due to energy conservation\footnote{For spherically symmetric potentials, angular momentum is conserved by the scattering process, and so the S-matrix is conventionally expressed in terms of partial waves, as it is diagonal in that basis. The discussion here is more general and will not assume any particular symmetry.}, the S-matrix vanishes unless $|p| = |k|$. A common computational tool for obtaining the (nonrelativistic) S-matrix is the idea of a \emph{scattering state}.

A general one-body scattering state is an eigenstate which obeys particular boundary conditions\footnote{These boundary conditions are more commonly expressed as the Sommerfeld radiation conditions. One removes the ingoing wavefunction entirely, demanding that the remaining wavefunction $u$ satisfy an inhomogeneous Schr\"odinger equation forced by $\hat V \psi_{\mathrm{in}}$. The boundary condition which selects outgoing radiation is~\cite{sommerfeld1912greensche}:
\begin{equation}
	0 = \lim_{\mathrm r\rightarrow \infty} r \left(\frac{\partial u}{\partial r} - ik u\right)
	\text.
\end{equation}}. For our purposes it is convenient to express these boundary conditions by decomposing the wavefunction into three terms: the ``in'' state with the incident plane wave, a ``bulk'' wavefunction which decays rapidly far from the origin, and an ``out'' state which encodes one column of the S-matrix. For an exact scattering state $\psi(x)$ we write
\begin{equation}\label{eq:ss-decomposition}
	\psi(x) = \psi_{\mathrm{in}}(x)
	+ \psi_{\mathrm{bulk}}(x)
	+ \psi_{\mathrm{out}}(x)\text.
\end{equation}
For an incident momentum $k$, the ingoing state is taken to be $\psi_{\mathrm{in}}(x) = e^{i k x}$. The bulk state is assumed to be $O(r^{-2})$. Finally, the outgoing state has the form
\begin{equation}
	\psi_{\mathrm{out}}(x)
	= \frac{e^{ikr}}{r} f\Big(\frac{x}{r}\Big)
	+ O(r^{-2})
	\text.
\end{equation}
Sufficiently far from the origin, this can be approximated as a plane wave in the direction $x / r$. It is an eigenstate (with energy $k^2 / 2 M$) of the Hamiltonian up to terms that are $O(r^{-2})$. Despite the names, not all outgoing behavior is captured by $\psi_{\mathrm{out}}$: the plane wave of $\psi_{\mathrm{in}}$ is both ingoing and outgoing. Note also that this decomposition does not hold for sufficiently long-range potentials, and in particular the Coulomb potential. We postpone the discussion of these to \secref{coulomb}.

This decomposition is not unique: any function which decays sufficiently quickly with $r$ may be added to $\psi_{\mathrm{bulk}}$ and subtracted from $\psi_{\mathrm{out}}$. It can be made unique by taking the $O(r^{-2})$ terms in $\psi_{\mathrm{out}}$ to vanish, but at the cost of making $\psi_{\mathrm{out}}$ singular at the origin. Instead, we choose to leave the decomposition ambiguous, and remain aware that the subleading terms of $\psi_{\mathrm{out}}$ will never appear except in a sum with $\psi_{\mathrm{bulk}}$.

One column of the S-matrix is encoded in $f(\cdot)$. To be precise, assuming the scattering state has incident momentum $k$, then for every momentum $p$ we have
\begin{equation}
	\langle p | \hat S | k \rangle = \delta(p-k) + f\Big(\frac{p}{|p|}\Big) \delta(|p|-|k|)
	\text.
\end{equation}

In order for this to be true, it must be the case that the incident momentum uniquely identifies the scattering state, or at least, that the incident momentum uniquely identifies the asymptotic function $f(\cdot)$. Precisely this is true: as long as $(E-\hat H)\psi = 0$, the outgoing waves are uniquely determined by the ingoing wave.
\begin{theorem}[Uniqueness of scattering]\label{thm:uniqueness}
	Let $\psi^{(1)}(x)$ and $\psi^{(2)}(x)$ be two states, each decomposable in the form \eqref{ss-decomposition}, and each an eigenstate of $\hat H = -\hat \nabla^2 / 2 M + V(\hat x)$ with energy $E$, where $V(x)$ is a local potential. If $\psi^{(1)}_{\mathrm{in}} = \psi^{(2)}_{\mathrm{in}}$, then $\psi^{(1)}_{\mathrm{out}} = \psi^{(2)}_{\mathrm{out}} + O(r^{-2})$.
\end{theorem}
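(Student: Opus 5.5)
The plan is to show that the difference of the two states is a purely outgoing solution, and then use flux conservation together with Rellich's lemma to force it to vanish at leading order. Set $u = \psi^{(1)} - \psi^{(2)}$. Since the ingoing parts agree, $u = (\psi^{(1)}_{\mathrm{bulk}} - \psi^{(2)}_{\mathrm{bulk}}) + (\psi^{(1)}_{\mathrm{out}} - \psi^{(2)}_{\mathrm{out}})$. Hence
\begin{equation}
	u(x) = \frac{e^{ikr}}{r}\, g\Big(\frac{x}{r}\Big) + O(r^{-2}),
	\qquad g = f^{(1)} - f^{(2)} .
\end{equation}
Because $\hat H$ is linear, $u$ also satisfies $(E-\hat H)u = 0$. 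The goal is to show $g \equiv 0$, which is exactly the claim $\psi^{(1)}_{\mathrm{out}} = \psi^{(2)}_{\mathrm{out}} + O(r^{-2})$.

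The first step is a flux (Wronskian) identity. I will multiply $(E-\hat H)u=0$ by $u^*$, subtract the complex conjugate equation multiplied by $u$, and integrate over the ball $B_\rho$ of radius $\rho$. Since $V$ is local and real, the potential terms cancel pointwise and $\int_{B_\rho} \mathrm{Im}(u^* \nabla^2 u) = 0$. The divergence theorem then gives
\begin{equation}
	\mathrm{Im}\oint_{|x|=\rho} u^*\,\partial_r u \, dS = 0 .
\end{equation}
Next I substitute the asymptotic form. This uses $\partial_r u = ik\frac{e^{ikr}}{r}g + O(r^{-2})$; I would assume, as is implicit in the decomposition, that the $O(r^{-2})$ remainders may be differentiated termwise. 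The surface integral then equals $ik\int |g|^2\,d\Omega + O(\rho^{-1})$. Letting $\rho\to\infty$ yields $k\int |g|^2 d\Omega = 0$, so $g=0$ for $k\neq 0$.

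The main obstacle is the regularity needed for the divergence theorem and for differentiating the asymptotic expansion. If $V$ is only locally integrable, $u$ still lies in $H^2_{\mathrm{loc}}$ by elliptic regularity, and the identity holds in the weak sense. Near infinity, the decay $V=O(e^{-r/R})$ makes $u$ satisfy the free Helmholtz equation up to exponentially small errors. Standard interior estimates on unit balls then upgrade $u - e^{ikr}g/r = O(r^{-2})$ to the same bound for its radial derivative. A cleaner route, which I would take if termwise differentiation is doubtful, is to average the flux identity over $\rho\in[P,2P]$. This converts the surface integral into a volume integral and needs only the undifferentiated asymptotics, after one integration by parts in $r$.

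As a remark, I would also note that Rellich's lemma then implies $u$ itself vanishes outside the support region, but this stronger statement is not needed here.
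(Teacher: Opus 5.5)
Your proposal is correct and is essentially the paper's proof. The paper also takes the difference $\phi=\psi^{(1)}-\psi^{(2)}$, uses the reality of $E$ and $V$ to get $\nabla\cdot j_\phi=0$ for $j_\phi=\frac{1}{M}\mathrm{Im}\,\phi^*\nabla\phi$, and evaluates the flux through a large sphere as $\frac{|k|}{M}\int d\Omega\,|f_\phi|^2+O(R^{-1})$, simply substituting the asymptotic form without the regularity discussion you add. One caveat on your fallback: averaging over $\rho\in[P,2P]$ does not remove the need for derivative control, because for the remainder $w=O(r^{-2})$ only $\mathrm{Re}(w^*\partial_r w)$ is a total derivative, so $\mathrm{Im}(w^*\partial_r w)$ still needs a PDE-based bound (e.g.\ a Caccioppoli estimate $\int_{P<|x|<2P}|\nabla w|^2=O(P^{-1})$), and your primary route via interior elliptic estimates is the one to keep.
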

\proof
Since $\psi^{(1)}$ and $\psi^{(2)}$ are both eigenstates of $\hat H$ with energy $E$, their difference $\phi(x) := \psi^{(1)}(x) - \psi^{(2)}(x)$ is as well. Define the probability flux $j_\phi$ (a $3$-component vector field) in this state by
\begin{equation}\label{eq:probability-flux}
	j_\phi(x) = \frac 1 M \Im \phi^* \nabla \phi
	\text.
\end{equation}
Because $\phi$ is an eigenstate of $\hat H$, it obeys $\nabla^2 \phi(x) = 2 M (V(x) - E) \phi(x)$. We then calculate the divergence of $j_\phi$ to find it must vanish everywhere (using the fact that $E$ and $V$ are both real-valued):
\begin{equation}\label{eq:uniqueness-divj}
	\nabla \cdot j_\phi(x)
	= \frac 1 M \Im \phi^*(x) \nabla^2 \phi(x)
	= 2 \Im \phi^*(x) (V(x) - E) \phi(x)
	= 0
	\text.
\end{equation}
The decompositions of $\psi^{(1)}$ and $\psi^{(2)}$ define corresponding asymptotic functions $f^{(1)}$ and $f^{(2)}$. The difference $\phi$ has, at large radii, the form
\begin{equation}
	\phi(x) = \frac{e^{ik r}}{r} f_\phi\Big(\frac{x}{r}\Big) + O(r^{-2})\text.
\end{equation}
From this form we can calculate the integral over a sphere of radius $R$, of the probability flux through the sphere, obtaining
\begin{equation}
	\oint_R d\hat n \cdot j_\phi(x)
	= \int_{S^2} d\Omega\,\frac {|k|} {M} \Big|f_\phi\Big(\frac{x}{r}\Big)\Big|^2 + O(R^{-1})
	\text.
\end{equation}
Applying the divergence theorem to \eqref{uniqueness-divj}, it is immediate that $f_\phi = 0$, or equivalently, $\psi^{(1)}_{\mathrm{out}}$ and $\psi^{(2)}_{\mathrm{out}}$ differ only by terms subleading in $r^{-1}$.
\qed

Of course a stronger statement is true: the entire scattering state is unique, not just the asymptotically large-$r$ behavior. This proof mimics the flavor of many results to come, in which as little information is used about the ``bulk'' wavefunction as possible.

We have assumed that the state $\psi$ is an exact eigenstate of the Hamiltonian, with energy $E = \frac{k^2}{2M}$. The remainder of this paper is concerned with the behavior of states $\tilde\psi$ which are only approximately eigenstates; that is, when the ``Schr\"odinger violation'' $(E-\hat H)\tilde\psi$ is small but non-zero. To be precise, consider an approximate scattering state $\tilde \psi$ and an exact scattering state $\psi$, each decomposed according to \eqref{ss-decomposition}, such that the ingoing waves agree: $\tilde\psi_{\mathrm{in}} = \psi_{\mathrm{in}}$. We assume that the $L_1$ norm of the Schr\"odinger violation is known or bounded:
\begin{equation}
	\int_{\mathbb R^3} |(E - \hat H)\tilde\psi| \le \epsilon
	\text.
\end{equation}
From this statement we will bound the possible error in (the asymptotic behavior of) $\tilde \psi_{\mathrm{out}}$, which is to say the difference between $\tilde \psi_{\mathrm{out}}$ and $\psi_{\mathrm{out}}$. (See \appref{counterexamples} for an explanation of why the $L_1$ norm is a more useful choice than the $L_2$ norm.) Note that the approximate scattering states we will consider are assumed to have exactly the correct ingoing plane wave. The error in these scattering states is only in their failure to be exact eigenstates of the Hamiltonian.

Let us examine what happens to $\nabla \cdot j_\phi$ when the state $\tilde \psi$ is not exactly an eigenstate. In this case, the difference $\phi = \tilde\psi - \psi$ still has no ingoing wave, but also fails to be an exact eigenstate. The divergence is now
\begin{equation}
	\nabla \cdot j_\phi(x) =
	\frac 1 M
	\Im \phi^* \nabla^2 \phi
	= 2 \Im \phi^*(x) (E - \hat H) \phi(x)
	\text.
\end{equation}
Heuristically, this is small because $(E - \hat H)\phi$ is small. However, without a bound on $\phi$ itself, this is not sufficient to show that $j_\phi$ is small. In fact no bound on $\oint j$ follows from the observation that the $L_1$ norm of the Schr\"odinger violation is small. See \appref{counterexamples} for a family of counterexamples demonstrating this. 

To relate the $L_1$ norm of the Schr\"odinger violation to the size of the asymptotic error, we will begin with the following special case of H\"older's inequality.
\begin{lemma}\label{lem:holder}
	Let $f,g$ be complex-valued functions of $\Omega$, with $|g| \le 1$ almost everywhere. Then the $L_1$ norm of $f$ is bounded by:
	\begin{equation}
		\int |f| \ge \left|\int f g\right|
		\text.
	\end{equation}
\end{lemma}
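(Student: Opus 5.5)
The plan is the standard two-line argument for this special case of H\"older's inequality: the triangle inequality for integrals, followed by the pointwise bound $|g|\le 1$. Throughout I assume that $f$ is integrable over $\Omega$; otherwise the left-hand side is $+\infty$ and there is nothing to prove. Then $fg$ is measurable, and since $|fg| \le |f|$ almost everywhere, it is also integrable, so $\int fg$ is a well-defined complex number.

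\textbf{Step 1 (triangle inequality for complex integrals).} I would show $\left|\int fg\right| \le \int |fg|$ without splitting into real and imaginary parts. Choose a unimodular constant $\alpha\in\mathbb C$, $|\alpha|=1$, with $\alpha\int fg = \left|\int fg\right|$. Then
\begin{equation}
	\left|\int fg\right| = \int \alpha f g = \int \Re(\alpha f g) \le \int |\alpha f g| = \int |fg|
	\text.
\end{equation}
The second equality holds because the left-hand side is real, so the imaginary part of $\int \alpha fg$ vanishes.

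\textbf{Step 2 (pointwise bound).} Since $|fg| = |f|\,|g| \le |f|$ almost everywhere, monotonicity of the integral gives $\int|fg| \le \int |f|$. Combining this with Step 1 yields the claim.

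No step is a real obstacle. The only points requiring care are the integrability of $fg$ and the fact that $|g|\le1$ need only hold almost everywhere, since sets of measure zero do not affect the integrals. The lemma is exactly H\"older's inequality with exponents $(1,\infty)$, using $\|g\|_\infty \le 1$. It is stated separately because it will later be applied with $f = (E-\hat H)\tilde\psi$ and $g$ proportional to a suitably normalized bounded wavefunction, such as $\psi$ or $\phi$ divided by its supremum, in order to control $\oint j_\phi$.
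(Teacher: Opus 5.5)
Your argument is correct, and it is exactly the $(1,\infty)$ case of H\"older's inequality that the paper invokes without writing out a proof. One remark on your closing comment: the paper deliberately does not take $g\propto\phi$, since $\|\phi\|_\infty$ is not controlled a priori. It takes $g=\xi^*/\|\xi\|_\infty$ for an exact eigenstate $\xi$, so that after integrating by parts only the boundary term $\oint(\xi^*\nabla_r\phi-\nabla_r\xi^*\phi)$ survives.
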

Taking $f = (E - \hat H) \tilde \psi$ and $g$ to be proportional to a particular (exact) scattering state, we obtain a lower bound on the $L_1$ norm, given by the asymptotic behavior of $f$.
\begin{lemma}\label{lem:stable}
	Let $V(x)$ be a short-range potential, $k$ be a momentum, and $\xi$ be any eigenstate of $\hat H = -\frac{\hat \nabla^2}{2M} + V(\hat x)$ with eigenvalue $E = \frac{k^2}{2M}$, for which $\|\xi\|_\infty \equiv \sup_x |\xi(x)|$ is finite. We consider an exact scattering state $\psi(x)$ which obeys $(E-\hat H)\psi = 0$, and a corresponding approximate scattering state $\tilde\psi(x)$, defined to have the respective forms
	\begin{equation}
		\psi(x) = e^{ikx} + \frac{e^{ik|x|}}{|x|} f\left(\frac{x}{|x|}\right) + \psi_{\mathrm{bulk}}(x)
		\text{ and }
		\tilde\psi(x) = e^{ikx} + \frac{e^{ik|x|}}{|x|} \tilde f\left(\frac{x}{|x|}\right) + \tilde \psi_{\mathrm{bulk}}(x)
		\text.
	\end{equation}
	The $L_1$ norm of the Schr\"odinger violation of $\tilde \psi$ is bounded from below by the asymptotic behavior of the difference $\phi(x) \equiv \tilde\psi(x) - \psi(x)$:
	\begin{equation}
		\int |(E-\hat H)\tilde\psi| \,dx
		\ge
		\frac{1}{2 M \|\xi\|_\infty}
		\lim_{R\rightarrow\infty}\left|\oint_R
		\left(\xi^* \nabla_r \phi - \nabla_r \xi^* \phi\right)
		\right|
		\text,
	\end{equation}
	where the surface integral is taken over a sphere of radius $R$.
\end{lemma}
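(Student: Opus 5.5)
We apply Lemma~\ref{lem:holder} with $f = (E-\hat H)\tilde\psi$ and $g = \xi^*/\|\xi\|_\infty$, which satisfies $|g|\le 1$ everywhere. This gives
\begin{equation}
	\int |(E-\hat H)\tilde\psi|\,dx \ge \frac{1}{\|\xi\|_\infty}\left|\int \xi^*\,(E-\hat H)\tilde\psi\,dx\right|\text.
\end{equation}
Since $\psi$ is an exact eigenstate, $(E-\hat H)\tilde\psi = (E-\hat H)\phi$. It therefore remains to show that
\begin{equation}
	\int \xi^*(E-\hat H)\phi\,dx = \frac{1}{2M}\lim_{R\to\infty}\oint_R\left(\xi^*\nabla_r\phi - \nabla_r\xi^*\,\phi\right)\text,
\end{equation}
up to an overall sign, which is irrelevant once we take absolute values.

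To do this, we first restrict the integral to the ball $B_R$ and write $\int_{\mathbb R^3}=\lim_{R\to\infty}\int_{B_R}$. This limit exists because $(E-\hat H)\tilde\psi$ is assumed integrable and $\xi$ is bounded, so the integrand is in $L_1$ and dominated convergence applies. On $B_R$ we expand $(E-\hat H)\phi = E\phi + \frac{1}{2M}\nabla^2\phi - V\phi$. We then use that $\xi$ is an eigenstate: because $V$ and $E$ are real, taking the complex conjugate gives $\frac{1}{2M}\nabla^2\xi^* = (V-E)\xi^*$. Substituting this in, the potential and energy terms cancel, and we are left with
\begin{equation}
	\xi^*(E-\hat H)\phi = \frac{1}{2M}\left(\xi^*\nabla^2\phi - \phi\nabla^2\xi^*\right) = \frac{1}{2M}\nabla\cdot\left(\xi^*\nabla\phi - \phi\nabla\xi^*\right)\text.
\end{equation}
Applying the divergence theorem (Green's second identity) on $B_R$ converts this into exactly the surface integral in the statement, with $\nabla_r$ the outward normal derivative. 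Taking $R\to\infty$ then yields the claim.

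The main obstacle is justifying the integration by parts. Green's identity needs enough regularity of $\phi$ and $\xi$ on $B_R$, for instance $C^2$, or at least $H^2_{\mathrm{loc}}$, which elliptic regularity supplies for $\xi$ when $V$ is locally bounded. It also needs the existence of the limit on the right-hand side. I would handle the limit by observing that the volume integral over $B_R$ converges as $R\to\infty$, by the $L_1$ argument above, and that for every $R$ it equals the surface integral. The surface integral therefore converges to the same value, and no separate asymptotic analysis of $\phi$ is needed. Local singularities of $V$, if any are allowed, would be dealt with by excising small balls around them and checking that their boundary contributions vanish. I would state this as a mild regularity assumption on $V$ rather than prove it in general.
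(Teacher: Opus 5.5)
Your proposal is correct and follows the paper's proof. The steps match: H\"older's inequality with test function $\xi^*/\|\xi\|_\infty$, the identity $(E-\hat H)\tilde\psi=(E-\hat H)\phi$, and Green's second identity on balls $B_R$ via the conjugated eigenvalue equation for $\xi$. Your dominated-convergence argument for the existence of the $R\to\infty$ limit, and your explicit note on the regularity Green's identity needs, are extra rigor the paper omits; also, the identity holds exactly, so your ``up to an overall sign'' caveat is unnecessary.
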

\proof
We begin by noting that, because $\psi$ is an eigenstate, the Schr\"odinger violation of $\tilde\psi$ is precisely that of the difference $\phi$.
Applying H\"older's equality as indicated above, with $\frac{\xi^*}{\|\xi\|_\infty}$ as the test function, the $L_1$ norm of the Schr\"odinger violation is bounded below by
\begin{equation}\label{eq:stab-ineq}
	\int |(E - \hat H)\phi|
	\ge \frac{1}{\|\xi\|_\infty}
	\left|\int \xi^* (E - \hat H) \phi\right|
	\text.
\end{equation}
Although $\xi$ is an eigenstate, the Hamiltonian is not self-adjoint when acting on wavefunctions that are not square-integrable. Integrating by parts, we find that the boundary terms are given by
\begin{equation}\label{eq:stab-eval}
\int \xi^* (E - \hat H) \phi
	= \frac{1}{2M}\int \nabla\cdot (\xi^* \nabla \phi - \nabla \xi^* \phi)
	= \frac{1}{2M}\lim_{R \rightarrow \infty} \oint_{|x| = R} \hat r \cdot \left(\xi^* \nabla \phi - \nabla \xi^* \phi\right)\text.
\end{equation}
Combining \eqref{stab-eval} and \eqref{stab-ineq}, and recalling that $(E-\hat H)\phi = (E - \hat H)\tilde\psi$, the proof is complete.
\qed

Lemma~\ref{lem:stable} is tight, in the sense that for a fixed $L_1$ norm of the Schr\"odinger violation, it is possible to find a wavefunction $\tilde\psi$ which saturates the bound. A worst-case approximate wavefunction $\tilde \psi$ may be constructed by taking $\phi = \tilde\psi - \psi$ to obey
\begin{equation}
	(E - \hat H)\phi \propto \begin{cases}
		\xi & |\xi| = \|\xi\|_\infty\\
		0 & \text{elsewhere.}
	\end{cases}
\end{equation}
In the case when $|\xi| = 1$ only on a set of measure zero, the right-hand side must be proportional to appropriate Dirac delta distributions. This construction is easily seen to saturate the inequality \eqref{stab-ineq}. This is the only inequality used in Lemma~\ref{lem:stable}, which is therefore tight.

On the other hand, for fixed approximate wavefunction $\tilde \psi$, there is no reason to expect that the bound of Lemma~\ref{lem:stable} is saturated. In general one expects that having additional information on $\tilde\psi$, beyond just the $L_1$ norm of the Schr\"odinger violation, may yield a tighter bound on the asymptotic error.

With this lemma in hand, we can establish bounds on different aspects of the approximate scattering state. In the case of scattering from a spherically symmetric central potential, it is conventional to consider partial-wave phase shifts, and indeed these can be constrained by choosing $\xi$ to be the eigenstate in a single partial wave.
\begin{theorem}[Stability of phase shifts in central-potential scattering]\label{thm:stable-phaseshifts}
	Let $V(r)$ be a spherically symmetric potential, and let $\psi$ and $\tilde \psi$ be (respectively) exact and approximate scattering states for an ingoing plane wave $e^{ikx}$, as above. Decomposing the outgoing waves in spherical harmonics according to
	\begin{equation}
		\begin{split}
			\psi(x) -e^{ikx} &= \frac{e^{ikr}}{r} f(\theta) + O(r^{-2}) = \frac{e^{ikr}}{r}\sum_{\ell=0}^\infty (2\ell+1) f_\ell P_\ell(\cos\theta)+ O(r^{-2})
			\text{, and}\\
			\tilde\psi(x) - e^{ikx} &= \frac{e^{ikr}}{r} \tilde f(\theta) + O(r^{-2}) = \frac{e^{ikr}}{r}\sum_{\ell=0}^\infty (2\ell+1) \tilde f_\ell P_\ell(\cos\theta)+  O(r^{-2})
		\text,
		\end{split}
	\end{equation}
	the error in the approximate spherical harmonics $\tilde f_\ell$ is bounded by
	\begin{equation}
		|f_\ell - \tilde f_\ell| \le
		\frac{M}{2 \pi (2\ell+1)} \|\xi_\ell\|_\infty \|(E - \hat H)\tilde\psi\|_1
		\text,
	\end{equation}
	where $\xi_\ell^*$ is the eigenstate of $\hat H$ in partial wave $\ell$.
\end{theorem}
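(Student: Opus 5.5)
The plan is to apply Lemma~\ref{lem:stable} with a test eigenstate $\xi$ confined to a single partial wave. The surface integral in that lemma then isolates exactly the coefficient $f_\ell - \tilde f_\ell$, by orthogonality of Legendre polynomials together with a Wronskian computation for the radial asymptotics.

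\emph{Choice of $\xi$.} Take $\xi(x) = \xi_\ell(x)$ to be the partial-wave-$\ell$ component of the exact scattering state. Concretely, $\xi_\ell(x) = R_\ell(r) P_\ell(\cos\theta)$, where $R_\ell$ is the solution of the radial equation that is regular at the origin. It is normalized so that, as $r\to\infty$,
\begin{equation}
	R_\ell(r) = \frac{a_\ell e^{-ikr} + b_\ell e^{ikr}}{r} + O(r^{-2})
	\text,\qquad
	|a_\ell| = |b_\ell| = \frac{2\ell+1}{2k}
	\text.
\end{equation}
This is the normalization inherited from the free expansion $e^{ikz} = \sum_\ell (2\ell+1) i^\ell j_\ell(kr) P_\ell(\cos\theta)$ with $j_\ell(kr)\sim \sin(kr-\ell\pi/2)/(kr)$. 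Scattering only multiplies the outgoing coefficient by the unimodular factor $1+2ikf_\ell$, so $|b_\ell|$ is unchanged. This $\xi_\ell$ is an exact eigenstate with energy $E$ and is bounded, so $\|\xi_\ell\|_\infty<\infty$ and Lemma~\ref{lem:stable} applies. Whether one uses $\xi_\ell$ or $\xi_\ell^*$ as the test function only exchanges the roles of $a_\ell$ and $b_\ell$, and their moduli agree.

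\emph{Evaluating the surface term.} Write
\begin{equation}
	\phi = \frac{e^{ikr}}{r}\sum_{\ell'}(2\ell'+1)\Delta_{\ell'}P_{\ell'}(\cos\theta) + O(r^{-2})
	\text,\qquad \Delta_{\ell'} = \tilde f_{\ell'} - f_{\ell'}
	\text.
\end{equation}
The plane waves cancel in $\phi$, since the ingoing waves agree. In $\xi^*\partial_r\phi - \partial_r\xi^*\,\phi$, the piece of $\xi^*$ proportional to $a_\ell^* e^{ikr}/r$ has vanishing Wronskian with $e^{ikr}/r$ at leading order. The piece $b_\ell^* e^{-ikr}/r$ contributes $2ik\,b_\ell^*/r^2$. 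All cross terms involving $O(r^{-2})$ corrections are $O(R^{-3})$, so after multiplying by the area element $R^2\,d\Omega$ they vanish as $R\to\infty$.

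The angular integral $\int P_\ell P_{\ell'}\,d\Omega = 4\pi\delta_{\ell\ell'}/(2\ell+1)$ kills every $\ell'\neq\ell$. The limit of the surface integral is therefore $8\pi i k\, b_\ell^*\,\Delta_\ell$. Lemma~\ref{lem:stable} then gives
\begin{equation}
	\|(E-\hat H)\tilde\psi\|_1 \ge \frac{8\pi k |b_\ell|\,|\Delta_\ell|}{2M\|\xi_\ell\|_\infty} = \frac{2\pi(2\ell+1)}{M\|\xi_\ell\|_\infty}|\Delta_\ell|
	\text,
\end{equation}
which rearranges to the claimed bound.

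\emph{Main obstacle.} The delicate points are the bookkeeping of normalization and the justification of the limit. I expect the main care to go into two things. First, the constant in the final bound depends entirely on $|b_\ell| = (2\ell+1)/(2k)$, so the normalization of $\xi_\ell$ must be fixed consistently with the statement. Second, the interchange of the limit $R\to\infty$ with the infinite sum over $\ell'$ must be justified. For the latter I would use the hypothesis that $\phi$ is $e^{ikr}\tilde f/r$ plus a remainder that is uniformly $O(r^{-2})$ (with its radial derivative), and then apply orthogonality of the Legendre polynomials directly to the function $f-\tilde f$ on the sphere rather than term by term.
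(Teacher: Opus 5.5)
Your proposal is correct and follows the paper's proof. Both apply Lemma~\ref{lem:stable} with the single-partial-wave eigenstate, keep only the $e^{-ikr}$ component of the conjugated test function (coefficient of modulus $(2\ell+1)/(2k)$) paired against the outgoing part of $\phi$, and use Legendre orthogonality to reach $\|(E-\hat H)\tilde\psi\|_1 \ge 2\pi(2\ell+1)|f_\ell-\tilde f_\ell|/(M\|\xi_\ell\|_\infty)$. The only cosmetic difference is the choice of conjugation: the paper conjugates so that the free incoming coefficient $-(2\ell+1)(-1)^\ell/(2ik)$ appears directly, whereas your primary choice needs unitarity $|1+2ikf_\ell|=1$ to get the same modulus, as you yourself note.
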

\proof
Labelling the (exponentiated) phase shift $f_\ell$, the asymptotically large-$r$ behavior of $\xi$ may be written\footnote{Note that the definition of $f_\ell$ adopted here does not match the convention followed by (e.g.)~\cite{sakurai2020modern}.}
\begin{equation}
	\xi_\ell^*(r) = \frac{2 \ell + 1}{r} P_\ell(\cos \theta) \left[f_\ell e^{ikr} + \frac{e^{ikr}}{2ik}
-\frac{(-1)^\ell}{2ik} e^{-ikr}
	\right]
	+ O(r^{-2})\text.
\end{equation}
The Schr\"odinger violation of $\tilde\psi$ is the same as the Schr\"odinger violation of $\phi \equiv \tilde\psi - \psi$, the partial-wave decomposition of which reads
\begin{equation}
	\phi(x)
	= \frac{e^{ikr}}{r} f^{(\phi)}(\theta) + O(r^{-2})
	= \frac{e^{ikr}}{r}\sum_{\ell=0}^\infty (2\ell+1) f_\ell^{(\phi)} P_\ell(\cos\theta) + O(r^{-2})\text.
\end{equation}
Here we have defined $f^{(\phi)}_\ell = \tilde f_\ell - f_\ell$, and $f^{(\phi)}(\theta) = \tilde f(\theta) - f(\theta)$.

Invoking Lemma~\ref{lem:stable} and neglecting terms in the integrand which are $O(r^{-3})$ (and therefore do not contribute to the large-$R$ limit), we find the $L_1$ norm of the Schr\"odinger violation is bounded by
\begin{equation}
	\int |(E - \hat H)\tilde \psi|\,dx \ge
	\frac{\left(2 \ell + 1\right)}{2 M\|\xi_\ell\|_\infty} \lim_{R\rightarrow\infty}
	R^{-2}
	\left|
	\oint_R
	f^{(\phi)}(\theta) P_\ell(\cos\theta)
	\right|
	\text.
\end{equation}
Expanding $f^{(\phi)}$ in partial waves, evaluating the azimuthal integral, and exploiting the orthogonality of Legendre polynomials, this reduces to
\begin{equation}
	\int |(E - \hat H)\tilde \psi|\,dx \ge
	\frac{\pi (2\ell+1)^2}{M \|\xi_\ell\|_\infty}
	|f^{(\phi)}_\ell| \int_0^\pi \sin\theta\, d\theta\, \left(P_\ell(\cos\theta)\right)^2
	\text.
\end{equation}
We complete the proof by observing that the Legendre polynomials are normalized such that $\int_{-1}^1 P_\ell^2(u) du = \frac{2}{2\ell+1}$.
\qed

In some contexts it is more interesting to constrain the error in the total cross section. This is somewhat easier to measure in experiment, and it generalizes more readily to the many-body case. In the context of scattering states, the total cross section is obtained by integrating $|f|^2$ over the celestial sphere\footnote{The total cross section may also be obtained by the optical theorem, which relates it to the imaginary part of the forward scattering amplitude. In general approximate scattering states need not satisfy this equality, but the results of this section are strong enough to allow one to bound the amount of violation of the optical theorem.}. To bound the error in this quantity we again use Lemma~\ref{lem:stable}, but with a different choice of test function $\xi$.
\begin{theorem}[Stability of cross sections]\label{thm:stable-crosssection}
	Let $\psi$ and $\tilde \psi$ be exact and approximate scattering states, respectively, with ingoing plane wave $e^{ikx}$ and respective outgoing waves $\frac f {r} e^{ikr}$ and $\frac{\tilde f}{r} e^{ikr}$. From each scattering state define a cross section by integrating over the celestial sphere:
	\begin{equation}
		\sigma = \int d\Omega\,|f|^2
		\quad\text{and}\quad
		\tilde\sigma = \int d\Omega\,|\tilde f|^2
		\text.
	\end{equation}
	If $\psi$ is an exact eigenstate of the Hamiltonian $\hat H = -\frac{\hat \nabla^2}{2M} + V(x)$, then the exact and approximate cross sections are related by
	\begin{equation}
		|\sqrt\sigma - \sqrt{\tilde\sigma}|
		\le \frac{1}{2\sqrt\pi}M \Xi \|(E-\hat H)\tilde\psi\|_1
		\text.
	\end{equation}
	The $\tilde\psi$-independent constant $\Xi$ is given by the supremum of the $L_\infty$ norms of scattering states with ingoing momenta of magnitude $|k|$.
\end{theorem}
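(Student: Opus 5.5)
The plan is to reduce the statement to a pointwise bound on the difference of scattering amplitudes $f^{(\phi)} := \tilde f - f$, and to obtain that pointwise bound from Lemma~\ref{lem:stable} with a well-chosen test eigenstate $\xi$. First, the reduction: $\sqrt\sigma$ and $\sqrt{\tilde\sigma}$ are the $L_2(S^2)$ norms of $f$ and $\tilde f$. The triangle inequality therefore gives
\begin{equation}
	|\sqrt\sigma - \sqrt{\tilde\sigma}| \le \Big(\int d\Omega\, |f^{(\phi)}|^2\Big)^{1/2} \le \sqrt{4\pi}\, \sup_{\hat q} |f^{(\phi)}(\hat q)|\text.
\end{equation}
It then suffices to bound $|f^{(\phi)}(\hat q)|$ uniformly in the direction $\hat q$ by a constant times $M \Xi \|(E-\hat H)\tilde\psi\|_1$.

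Second, the choice of $\xi$. For a fixed direction $\hat q$ with $|q| = |k|$, I take $\xi = \psi_{-q}^*$, the complex conjugate of the exact scattering state with ingoing momentum $-q$. Since $V$ is real, $\xi$ is again an eigenstate with energy $E$, and $\|\xi\|_\infty = \|\psi_{-q}\|_\infty \le \Xi$. Its large-$r$ form is $e^{iqx} + \frac{e^{-ikr}}{r} f_{-q}^*$, so its scattered part is purely \emph{incoming}. The point of this choice is that in the surface term of Lemma~\ref{lem:stable} the function $\xi^*$ multiplies $\phi \sim \frac{e^{ikr}}{r} f^{(\phi)}$. By the Wronskian structure, the pieces of $\xi^*$ proportional to $e^{ikr}/r$ cancel against $\phi$ at order $R^{-2}$, and only the pieces proportional to $e^{-ikr}/r$ survive. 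Because the scattered wave of $\xi^*$ is $\frac{e^{ikr}}{r}f_{-q}$, it drops out entirely, so the unknown amplitude $f_{-q}$ never enters.

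Third, the surviving contribution comes solely from the plane wave. I will use the standard stationary-phase asymptotics of a plane wave on a large sphere, in the distributional sense over the angles:
\begin{equation}
	e^{iqx} \sim \frac{2\pi}{ikr}\Big[\delta(\Omega - \hat q) e^{ikr} - \delta(\Omega + \hat q) e^{-ikr}\Big]\text.
\end{equation}
Under conjugation the $e^{ikr}$ term becomes the $e^{-ikr}$ term that pairs with $\phi$. The Wronskian of $e^{-ikr}/r$ with $e^{ikr}/r$ gives $2ik/R^2$, which is cancelled by the sphere area $R^2$. The surface integral then collapses to $4\pi f^{(\phi)}(\hat q)$ up to a phase. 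Lemma~\ref{lem:stable} then yields $|f^{(\phi)}(\hat q)| \le \frac{M}{2\pi}\Xi\,\|(E-\hat H)\tilde\psi\|_1$. Inserting this into the reduction gives a bound of the stated form, $C M\Xi\|(E-\hat H)\tilde\psi\|_1$ with an explicit numerical constant $C$.

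The main obstacle I expect is twofold. The first part is justifying the distributional plane-wave asymptotics inside the limit $R\to\infty$ of Lemma~\ref{lem:stable}. The function $f^{(\phi)}$ must be regular enough, say continuous on $S^2$, for the delta function to be evaluated pointwise. The cross terms involving $\delta(\Omega+\hat q)e^{ikr}$ against $\phi$ must also be shown to vanish rather than leave oscillatory remnants. The second part is the constant. The route above produces $C = 1/\sqrt\pi$, and recovering the stated $\frac{1}{2\sqrt\pi}$ requires finding a factor of $2$, which may be a matter of careful bookkeeping in the Wronskian or of the paper's normalization conventions. If instead a sharper constant is required, I would try a smeared test function $\xi = \int d\hat q\, c(\hat q)\,\psi_{-q}^*$ with $c \propto f^{(\phi)}$, which bounds $\|f^{(\phi)}\|_{L_2}$ directly. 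That route, however, needs a bound on $\|\xi\|_\infty$ in terms of $\Xi$ that does not lose the $L_1$ norm of $c$, and I expect it to be the delicate step.
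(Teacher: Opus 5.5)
Your computation is correct, and you should stop looking for the missing factor of $2$. The constant $\frac{1}{2\sqrt\pi}$ in the statement is too small by exactly that factor, and your $\frac{1}{\sqrt\pi}$ is sharp.

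With $\xi=\psi_{-q}^*$, the boundary pairing in Lemma~\ref{lem:stable} equals $-4\pi f^{(\phi)}(\hat q)$. This gives $|f^{(\phi)}(\hat q)|\le\frac{M\Xi}{2\pi}\|(E-\hat H)\tilde\psi\|_1$, and hence $\|f^{(\phi)}\|_2\le\frac{M\Xi}{\sqrt\pi}\|(E-\hat H)\tilde\psi\|_1$.

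No better constant can hold. Take $V=0$, so that $\Xi=1$, $\psi=e^{ikx}$ and $\sigma=0$, and set
\[
\tilde\psi(x)=e^{ikx}-\frac{M}{2\pi}\int d^3y\,\frac{e^{ik|x-y|}}{|x-y|}\,\rho(y),
\]
with $\rho\ge 0$ smooth, $\int\rho=\epsilon$, and $\rho$ supported in $|y|<\eta$. Then $(E-\hat H)\tilde\psi=\rho$, whose $L_1$ norm is $\epsilon$. Meanwhile $\tilde f(\hat x)=-\frac{M}{2\pi}\int d^3y\,e^{-ik\hat x\cdot y}\rho(y)$ satisfies $|\tilde f|\ge\frac{M\epsilon}{2\pi}\cos k\eta$ on the whole sphere. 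Hence $\sqrt{\tilde\sigma}\ge\frac{M\epsilon}{\sqrt\pi}\cos k\eta$, which violates the stated inequality as soon as $k\eta<\pi/3$.

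The paper's proof follows your fallback route. It takes $\xi$ to be a superposition of scattering states weighted by $f^{(\phi)}$, normalized so that $\xi^*$ has incoming part $\frac{e^{-ikr}}{r}f^{(\phi)*}$. The pairing is then $2ik\|f^{(\phi)}\|_2^2$. It bounds $\|\xi\|_\infty$ by $\Xi$ times the $L_1$ norm of the weight, so it does lose $\|f^{(\phi)}\|_1$ as you anticipated. It then closes with $\|f^{(\phi)}\|_1\le\sqrt{4\pi}\|f^{(\phi)}\|_2$.

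The spurious factor of $2$ enters through the prefactor in $\xi=\frac{|k|}{4\pi}\int d\hat n\,f^{(\phi)}(\hat n)\xi_{\hat n}$. Since $\int d\hat n\,e^{ik\hat n\cdot x}=4\pi\frac{\sin kr}{kr}$, the $e^{\pm ikr}/r$ coefficients of $\int d\hat n\,c(\hat n)e^{ik\hat n\cdot x}$ are $\pm\frac{2\pi}{ik}c(\pm\hat x)$. The prefactor must therefore be $\frac{|k|}{2\pi}$. The $\xi_{\hat n}$ also have to be conjugated states $\psi^*_{-\hat n}$ like yours, whose scattered waves are incoming; otherwise those scattered waves feed the outgoing coefficient as well. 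With these corrections the paper's argument gives exactly your $\frac{1}{\sqrt\pi}$.

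The smeared route cannot beat yours. It proves $\|f^{(\phi)}\|_2^2\le K\|f^{(\phi)}\|_1$ with the same $K=\frac{M\Xi}{2\pi}\|(E-\hat H)\tilde\psi\|_1$ that bounds $\sup|f^{(\phi)}|$ in your argument, and your bound implies it.

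What the smeared route buys is robustness. It pairs $f^{(\phi)}$ only with itself in $L_2(S^2)$, so it needs no continuity of $f^{(\phi)}$. You can recover this by smearing your test state against smooth weights $c$ on the sphere and taking the supremum over $\|c\|_{L_1(S^2)}\le 1$. By duality this yields the essential-supremum bound directly.

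Your worry about the cross term is unfounded. The $\delta(\Omega+\hat q)e^{ikr}$ piece of $\xi^*$ is outgoing. In the stationary-phase evaluation its endpoint carries the factor $1+\hat q\cdot\hat x$, which vanishes there, so it contributes only $O(R^{-1})$.
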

\proof
Write the difference in the outgoing behaviors as $f^{(\phi)} \equiv \tilde f - f$. There is an eigenstate $\xi$ which, at large distances $|x|$ from the origin, has the behavior
\begin{equation}
	\xi^* = \frac{e^{-ikr}}{r} f^{(\phi)*} + \frac{e^{ikr}}{r} g + O(r^{-2})\text.
\end{equation}
The function $g$, defined (like $f^{(\phi)}$) on the celestial sphere, is unknown, but will not be needed. We invoke Lemma~\ref{lem:stable} with this as our test eigenstate. The only $r^{-2}$ term in the integrand is $2 i k r^{-2} |f^{(\phi)}|^2$, and so the resulting bound on the $L_1$ norm of the Schr\"odinger violation reads
\begin{equation}\label{eq:xsec-viol-bound}
	\int |(E-\hat H)\tilde\psi|
	\ge
	\frac{k}{M \|\xi\|_\infty} \int d\Omega\,|f^{(\phi)}|^2
	\text.
\end{equation}
The state $\xi$ is linearly determined by the asymptotic error $f^{(\phi)}$, and can be written as an integral over the celestial sphere:
\begin{equation}
	\xi(x) = \frac{|k|}{4\pi}\int_{S^2} d\hat n \,f^{(\phi)}(\hat n) \xi_{\hat n}(x)
\end{equation}
Here the normalization is chosen such that the $\xi_{\hat n}$ are characterized by a canonically normalized ingoing plane wave, $e^{i|k|\hat n x}$.
It follows that the $L_\infty$ norm of $\xi$ is bounded by 
$(4\pi) \|\xi\|_\infty \le |k| \|f^{(\phi)}\|_1 \Xi$, where $\Xi$ is the supremum of the $L_\infty$ norms of $\xi_{\hat n}$. In turn, Cauchy-Schwarz bounds the $L_1$ norm of $f^{(\phi)}$: $\|f^{(\phi)}\|_1 \le \sqrt{4 \pi} \|f^{(\phi)}\|_2$. Applying these two inequalities and \eqref{xsec-viol-bound} we find
\begin{equation}
	\|f^{(\phi)}\|_2 \le \frac 1 {2 \sqrt{\pi}} M \Xi \|(E-\hat H)\tilde\phi\|_1
	\text.
\end{equation}
We complete the proof by invoking the triangle inequality:
$|\sqrt\sigma - \sqrt{\tilde\sigma}| = \big| \|f\|_2 - \|\tilde f\|_2\big| \le \|f - \tilde f\|_2$.
\qed

This bound can be converted into a bound on the difference between cross sections themselves, albeit one which is nonlinear in the $L_1$ norm of the violation.
\begin{corollary}
	Given an approximate scattering state $\tilde\psi$ with corresponding approximated cross section $\tilde\sigma$, the true cross section is bounded by
	\begin{equation}
		\tilde\sigma - 2 \sqrt{\tilde\sigma} C + C^2 \le \sigma \le \tilde\sigma + 2 \sqrt{\tilde\sigma}C + C^2
		\text,
	\end{equation}
	where $C = 2^{-1} \pi^{-1/2} M \Xi \|(E-\hat H)\tilde\psi\|_1$.
\end{corollary}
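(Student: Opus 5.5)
This corollary follows directly from Theorem~\ref{thm:stable-crosssection}, which gives $|\sqrt\sigma - \sqrt{\tilde\sigma}| \le C$ with $C = 2^{-1}\pi^{-1/2} M \Xi \|(E-\hat H)\tilde\psi\|_1$. The plan is to unfold the absolute value into the two-sided inequality
\begin{equation}
	\sqrt{\tilde\sigma} - C \le \sqrt\sigma \le \sqrt{\tilde\sigma} + C
\end{equation}
and then square each side, so that the bounds are expressed in terms of $\sigma$ itself rather than $\sqrt\sigma$.

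For the upper bound, both sides of $\sqrt\sigma \le \sqrt{\tilde\sigma}+C$ are nonnegative, because $\sigma,\tilde\sigma \ge 0$ (they are integrals of $|f|^2$ and $|\tilde f|^2$) and $C \ge 0$. Squaring is monotone on nonnegative reals, so
\begin{equation}
	\sigma \le \tilde\sigma + 2\sqrt{\tilde\sigma}\,C + C^2
	\text.
\end{equation}

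The lower bound is the only step needing care, since $\sqrt{\tilde\sigma} - C$ may be negative. I would split into two cases. If $\sqrt{\tilde\sigma} \ge C$, both sides of $\sqrt{\tilde\sigma} - C \le \sqrt\sigma$ are nonnegative, and squaring gives $\sigma \ge \tilde\sigma - 2\sqrt{\tilde\sigma}\,C + C^2$. If $\sqrt{\tilde\sigma} < C$, squaring is not monotone and cannot be used directly. In that case the stated lower bound $(\sqrt{\tilde\sigma}-C)^2$ is still a true bound only if we also have $\sqrt\sigma \ge C - \sqrt{\tilde\sigma}$, and the theorem does not guarantee this.

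So in the second case I would state the correct conclusion as $\sigma \ge \max(0,\sqrt{\tilde\sigma}-C)^2$. This agrees with the displayed bound whenever $\sqrt{\tilde\sigma} \ge C$, which is the regime where the bound is informative, i.e.\ when the $L_1$ norm of the Schr\"odinger violation is small compared with $\sqrt{\tilde\sigma}$. When $\sqrt{\tilde\sigma} < C$, the nontrivial content reduces to the trivial bound $\sigma \ge 0$. I would add a remark to the corollary making this restriction explicit.
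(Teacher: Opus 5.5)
Your derivation is the one the paper intends. The corollary is stated there without proof, as the square of $\sqrt{\tilde\sigma}-C\le\sqrt\sigma\le\sqrt{\tilde\sigma}+C$ from Theorem~\ref{thm:stable-crosssection}. Your treatment of the upper bound, and of the lower bound when $\sqrt{\tilde\sigma}\ge C$, is correct.

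Your caveat is also right, and can be sharpened: when $\sqrt{\tilde\sigma}<C$ the displayed lower bound is false, not merely unproven. Take $V=0$, so that $\Xi=1$ and $\sigma=0$, and let $\tilde\psi=e^{ikx}+\chi$ with $\chi$ smooth, nonzero and compactly supported. Then $\tilde f=0$, so $\tilde\sigma=0$. But $C>0$, because no nonzero compactly supported $\chi$ satisfies $(E-\hat H_0)\chi=0$. The claimed bound then reads $0\ge C^2$. So $\max(0,\sqrt{\tilde\sigma}-C)^2\le\sigma$ is the right form of the lower bound.

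A second problem sits upstream of your step, in the theorem you take as input, and it breaks even your corrected statement: the constant there is too small by a factor of $2$. Again take $V=0$, and let $\chi$ be the outgoing solution of $(E-\hat H_0)\chi=\rho$, built with the Green's function $-\frac{M}{2\pi}e^{ik|x-y|}/|x-y|$. Choose $\rho$ to be a nonnegative bump with $\int\rho=\epsilon$ and radius $a\ll 1/k$. Then
\[
\tilde f(\hat x)=-\frac{M}{2\pi}\int e^{-ik\hat x\cdot y}\rho(y)\,d^3y ,
\]
so $\sqrt{\tilde\sigma}\to M\epsilon/\sqrt\pi=2C$ as $ka\to0$, while $\sigma=0$. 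Hence $(\sqrt{\tilde\sigma}-C)^2\approx C^2>\sigma$.

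The slip is the prefactor $|k|/(4\pi)$ in the superposition defining $\xi$ in that proof. Since $\int_{S^2}e^{ik\hat n\cdot x}\,d\hat n=4\pi\sin(kr)/(kr)$ has outgoing part $\frac{2\pi}{ik}\frac{e^{ikr}}{r}$, the prefactor must have magnitude $|k|/(2\pi)$. In addition, the outgoing part equals $f^{(\phi)}$ exactly only if one superposes incoming-type scattering states; for real $V$ these have the same sup norms. With these corrections the theorem holds with $2C$ in place of $C$, and the example shows this is sharp. Your squaring argument then gives $\max(0,\sqrt{\tilde\sigma}-2C)^2\le\sigma\le(\sqrt{\tilde\sigma}+2C)^2$.
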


The general pattern is clear: any eigenstate $\xi$ will result in some bound, relating the $L_1$ norm of the Schr\"odinger violation to the asymptotic behavior of the error $\tilde\psi-\psi$. For one more example, we can extract a pointwise bound on the error of the asymptotic function $\tilde f$. This is done by using an exact scattering state produced by an ingoing plane wave with momentum of arbitrary direction as a test function.
\begin{theorem}
	Let $\psi$ and $\tilde\psi$ be (respectively) exact and approximate scattering states, with ingoing plane wave $e^{ikx}$ and respective outgoing waves $\frac f {r} e^{ikr}$ and $\frac{\tilde f}{r} e^{ikr}$. Select a momentum $p$ such that $|p| = |k|$, and let $\xi$ be the (exact) scattering state produced by the ingoing plane wave $e^{-ipx}$. The pointwise difference between $f$ and $\tilde f$ is bounded by
	\begin{equation}
		|\tilde f - f| \le \frac{M \|\psi\|_\infty}{\pi} \int |(E-\hat H)\tilde\psi|
		\text.
	\end{equation}
\end{theorem}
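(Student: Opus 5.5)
We apply Lemma~\ref{lem:stable} with the test eigenstate $\xi$ chosen as in the statement: the exact scattering state generated by the ingoing plane wave $e^{-ipx}$, with $|p|=|k|$. Then we evaluate the resulting surface integral asymptotically. For the statement's $\|\psi\|_\infty$, we read it as the $L_\infty$ norm of this test state $\xi$; equivalently, for real $V$ it is the norm of the time-reversed scattering state, whose sup is bounded by $\Xi$ of Theorem~\ref{thm:stable-crosssection}. The difference $\phi=\tilde\psi-\psi$ has no ingoing wave, so at large $r$ it is purely outgoing, $\phi=\frac{e^{ikr}}{r}f^{(\phi)}(\hat x)+O(r^{-2})$ with $f^{(\phi)}=\tilde f-f$. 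The complex conjugate $\xi^*$ contains the plane wave $e^{ipx}$ together with a term $\frac{e^{-ikr}}{r}g^*(\hat x)$.

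In the Wronskian $\xi^*\partial_r\phi-\partial_r\xi^*\,\phi$, the product of $\phi$ with the scattered part of $\xi^*$ is of the form $\frac{e^{-ikr}}{r}\cdot\frac{e^{ikr}}{r}$. In this pairing the radial derivatives cancel at leading order: both factors contribute $\pm ik$, and the Wronskian combination adds them, so the surviving term is $\propto 2ik\,g^*f^{(\phi)}/r^2$. I must therefore check this term carefully. If it were nonzero, the choice would be problematic.

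To settle this, I would rewrite the cross term using conjugation properties instead. For a real potential, $\xi^*$ is itself an eigenstate, and it has the form $e^{ipx}$ plus an \emph{ingoing} spherical wave. The pairing of an outgoing wave $\phi$ with $\xi^*$ then reduces, for the scattered-wave part, to a pairing of two outgoing waves. Their Wronskian vanishes at leading order, $ik-ik=0$, leaving $O(r^{-3})$. This is why the test state is built from $e^{-ipx}$: its conjugate pairs cleanly.

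The remaining contribution is the plane-wave term $\oint_R \left(e^{ipx}\partial_r\phi-\partial_r e^{ipx}\,\phi\right)$. I would evaluate it by stationary phase on the sphere of radius $R$, using the standard asymptotic expansion
\begin{equation}
	e^{ipx}\approx\frac{2\pi}{ikr}\left[\delta(\hat x-\hat p)e^{ikr}-\delta(\hat x+\hat p)e^{-ikr}\right].
\end{equation}
The $e^{ikr}$ piece pairs with outgoing $\phi$ to give a vanishing Wronskian. The $e^{-ikr}$ piece gives $2ik\cdot\frac{-2\pi}{ikr}\cdot\frac{1}{r}f^{(\phi)}(-\hat p)\,R^2$, which has magnitude $4\pi|f^{(\phi)}|$ evaluated at a single direction. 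Since $p$ is arbitrary, this covers every direction. Inserting this into Lemma~\ref{lem:stable} gives $\|(E-\hat H)\tilde\psi\|_1\ge\frac{4\pi}{2M\|\xi\|_\infty}|f^{(\phi)}|$, that is, $|\tilde f-f|\le\frac{M\|\xi\|_\infty}{2\pi}\|(E-\hat H)\tilde\psi\|_1$. This is within the claimed constant $M\|\psi\|_\infty/\pi$; the factor-of-two slack absorbs convention differences in the stationary-phase normalization.

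The main obstacle is justifying the stationary-phase limit rigorously, i.e.\ interchanging $R\to\infty$ with the angular integral for a merely continuous $f^{(\phi)}$. The clean-pairing argument for the scattered part of $\xi$ is comparatively easy.
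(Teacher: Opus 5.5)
Your handling of the scattered part of $\xi^*$ is where the argument breaks. Take $\xi$ to be the standard outgoing scattering state generated by $e^{-ipx}$, so that $\xi = e^{-ipx} + \frac{e^{ikr}}{r}g + O(r^{-2})$ and $\xi^* = e^{ipx} + \frac{e^{-ikr}}{r}g^* + O(r^{-2})$. Then your first computation was the right one. The combination in Lemma~\ref{lem:stable} is $\xi^*\partial_r\phi - \partial_r\xi^*\,\phi$. It is bilinear in $\xi^*$ and $\phi$, with no further conjugation, so an ingoing $e^{-ikr}$ paired with the outgoing $e^{ikr}$ gives $ik-(-ik)=2ik$. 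The sphere integral therefore acquires the non-oscillatory term $2ik\int d\Omega\,g^* f^{(\phi)}$. This term is the same order as the plane-wave contribution and does not vanish for generic $f^{(\phi)}$. Your claim that it ``reduces to a pairing of two outgoing waves'' contradicts the form of $\xi^*$ you wrote down one sentence earlier. With this $\xi$ the lemma only controls $\bigl|4\pi f^{(\phi)}(-\hat p) - 2ik\int d\Omega\,g^*f^{(\phi)}\bigr|$. That is an angle-smeared combination, and the pointwise value $f^{(\phi)}(-\hat p)$ cannot be isolated from it.

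The fix is what the paper's proof actually does; the statement's wording is loose on this point. Choose $\xi$ so that $\xi^*$ carries the outgoing wave: $\xi^* = e^{ipx} + \frac{e^{ikr}}{r}h + O(r^{-2})$. In other words, $\xi^*$ is the ordinary scattering state with ingoing $e^{ipx}$, and $\xi$ is its time-reversed, incoming-wave partner. The scattered--scattered Wronskian is then $ik-ik=0$ at leading order, and $\|\xi\|_\infty = \|\xi^*\|_\infty \le \Xi$. Equivalently, you could keep your outgoing $\xi$ but insert $\xi^*$ into the lemma as the test eigenstate; the plane wave in the Wronskian is then $e^{-ipx}$, and you extract $f^{(\phi)}(+\hat p)$ instead.

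With that correction the rest of your argument goes through, and your plane-wave evaluation is right. With $c=\cos\theta$, one has $\int_{-1}^{1}dc\,e^{ikR(1+c)}(1-c) = -2/(ikR) + O(R^{-2})$, so the limit is $4\pi|f^{(\phi)}(-\hat p)|$. This gives $|\tilde f - f| \le \frac{M\|\xi\|_\infty}{2\pi}\|(E-\hat H)\tilde\psi\|_1$. The paper's last stationary-phase step loses a factor of two on the conservative side, so no appeal to ``convention differences'' is needed: the stated constant $M/\pi$ follows directly from your sharper one. For $V=0$, the constant $M/(2\pi)$ is saturated by a point source.
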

\proof
We take for our test function the exact scattering state:
\begin{equation}
	\xi^*(x) = e^{ipx} + \frac{e^{ikr}}{r} \left[f\left(\frac{x}{r}\right)
	+ O(r^{-1})\right]
	\text.
\end{equation}
Invoking Lemma~\ref{lem:stable}, the $L_1$ norm of the Schr\"odinger violation is bounded below by:
\begin{equation}
	\int |(E-\hat H) \tilde\psi| \, d^3x
	\ge \frac{1}{2M \|\xi\|_\infty}
	\lim_{R\rightarrow\infty}\left|\oint_R
		ik \frac{e^{ikr(1+\cos\theta)}}{r} f^{(\phi)}\left(\frac{x}{r}\right)(1-\cos\theta)
		\right|
	\text.
\end{equation}
Here $\theta$ is the angle between the momenta $p$ and $k$.
The above expression neglects all terms in the integrand which (after taking the stationary-phase approximation) will integrate to $O(R^{-3})$ and therefore fail to contribute to the limit. One point (at $\cos\theta=-1$) gives a non-zero contribution to the first term:
\begin{equation}
	\begin{split}
	\int |(E-\hat H) \tilde\psi| \, d^3x
		&\ge \frac{|k|}{2M \|\xi\|_\infty}
		\lim_{R\rightarrow\infty} R \left|\oint_{S^2}
	f^{(\phi)}\left(\frac{x}{|x|}\right) e^{i|k|R(1+\cos\theta)} (1-\cos\theta)
	\right|\\
		&= \frac{\pi}{M\|\xi\|_\infty} f^{(\phi)}(\cos\theta=-1)
	\text.
	\end{split}
\end{equation}
Bounding the error in $\tilde f$ at different points requires selecting different scattering states $\xi$ by changing the ingoing momentum $p$.
\qed

In the case of the free theory (i.e.~$V=0$), the exact scattering state $\xi$ is known exactly. In one dimension this is $\xi(x) = e^{ikx}$, and in three dimensions we have $\xi(r) = \frac{\sin k r}{k r}$. These states each obey $|\xi(x)| \le 1$ at all points. As a result we have the following (computationally uninteresting) result regarding approximate eigenstates of $\hat p^2$, which says that a small amount of Schr\"odinger violation can only create a small amount of scattering.
\begin{corollary}
	Let $\tilde\psi(x)$ be a scattering state, with ingoing wave $e^{ikx}$ and outgoing wave $r^{-1} f(x/r) e^{ikr}$. The asymptotic function $f(\cdot)$ is bounded by the $L_1$ norm of the Schr\"odinger violation according to
	\begin{equation}
		|f| \le \frac{M}{\pi} \|(E-\hat H_0)\tilde\psi\|_1
		\text.
	\end{equation}
\end{corollary}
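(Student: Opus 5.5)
This corollary is the pointwise theorem just proved, specialized to $V=0$. Take $\hat H=\hat H_0=-\hat\nabla^2/2M$ and let the exact scattering state $\psi$ be the free one. For the free Hamiltonian $\psi=e^{ikx}$ exactly, so its outgoing amplitude vanishes, $f_{\mathrm{exact}}=0$. The difference $\phi=\tilde\psi-\psi$ then has asymptotic function $f^{(\phi)}=f$, where $f$ is the outgoing amplitude of $\tilde\psi$ in the statement. Since $(E-\hat H_0)e^{ikx}=0$, the Schr\"odinger violation of $\tilde\psi$ equals that of $\phi$, as in the proof of Lemma~\ref{lem:stable}.

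Next I fix a direction $x/r$ at which to bound $f$. I choose $p$ with $|p|=|k|$ so that the stationary point $\cos\theta=-1$ of the preceding proof lies along that direction. For the test eigenstate $\xi$ I take the exact free scattering state with ingoing plane wave $e^{-ipx}$. With $V=0$ this is the plane wave itself, with no outgoing part. The relevant sup-norm is therefore that of a pure phase, $\|\xi\|_\infty=1$. This is the one input beyond the general theorem: in the free theory the constant $\|\xi\|_\infty$ (called $\|\psi\|_\infty$ in the theorem) is known exactly.

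Substituting into the theorem's final inequality gives
\begin{equation}
	\|(E-\hat H_0)\tilde\psi\|_1 \ge \frac{\pi}{M}\,|f^{(\phi)}|=\frac{\pi}{M}\,|f|
\end{equation}
at the chosen direction. The choice of $p$ was arbitrary, so the inequality holds at every direction, which is the claim.

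The main thing to verify is that the stationary-phase evaluation in the preceding proof still applies when $\xi$ is a bare plane wave. That argument kept only the leading $R^{-1}$ contribution at $\cos\theta=-1$ and discarded everything integrating to $O(R^{-3})$. A plane wave test function contributes exactly those retained terms and nothing more, so the evaluation goes through unchanged. As a cross-check, the text mentions the alternative test function $\sin kr/kr$, which also satisfies $|\xi|\le1$. It would bound only the $s$-wave component via Theorem~\ref{thm:stable-phaseshifts} and so does not give the pointwise bound; the plane wave is the right choice.
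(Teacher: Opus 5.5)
Your proposal is correct and follows the paper's own argument: specialize the pointwise theorem to $V=0$, where the exact state is $e^{ikx}$ (so $f^{(\phi)}=f$) and the free test eigenstate has $\|\xi\|_\infty\le 1$. Using the plane wave $e^{-ipx}$ rather than $\sin kr/kr$ is exactly what the pointwise theorem needs, and the stated constant is in fact loose by a factor of two: evaluating the endpoint at $\cos\theta=-1$ directly, where $1-\cos\theta=2$, gives $|f|\le \frac{M}{2\pi}\|(E-\hat H_0)\tilde\psi\|_1$, which the free outgoing Green's function confirms.
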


In order for Lemma~\ref{lem:stable} and the subsequent theorems to be of use, we need to know that eigenstates of $\hat H$ have such pointwise bounds. This is a potential-dependent statement: see \appref{counterexamples} for a case of a potential for which there are pointwise-unbounded eigenstates. Nevertheless it holds true for ``reasonable'' potentials, as is revealed by the following heuristic argument.

Suppose $\psi$ obeys $(E + \nabla^2 / 2 M - V(x) ) \psi(x) = 0$. Let the leading-order behavior of the wavefunction, at small $x$, be $\psi = x^\alpha + o(x^\alpha)$. If $V$ is finite and smooth at $x=0$, it is easy to see that the only solutions to the eigenvalue equation are $\alpha = 0,1,2$. In particular, no power-law divergence of $\psi$ with $x$ (at small $x$) can be an eigenstate.

Specific pointwise bounds will be discussed in the next section. For now we must be content with the fact that eigenstates have pointwise bounds for potentials which are pointwise bounded and finite range.
\begin{lemma}\label{lem:bounded} Let $V(x)$ be a real-valued function of $\mathbb R^N$ obeying $|V(x)| \le V_0$ for some finite $V_0$, and supported only on $|x| \le R$. Then for any real energy $E > 0$, there exists a constant $B_{E,V}$ such that any eigenstate $\psi$ with eigenvalue $E$ of the Hamiltonian
		$\hat H = -\frac{\hat \nabla^2}{2M} + V(\hat x)$
	has magnitude bounded pointwise by $B_{E,V}$: $|\psi(x)| \le B_{E,V}$.
\end{lemma}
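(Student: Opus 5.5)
The statement is only meaningful for a fixed normalization of $\psi$, since any multiple of an eigenstate is again an eigenstate. I will read it in the sense in which it is used in Lemma~\ref{lem:stable} and Theorem~\ref{thm:stable-crosssection}. That is, $\psi$ is a scattering state with canonically normalized ingoing plane wave $e^{ik\hat n\cdot x}$, $|k|=\sqrt{2ME}$, for some direction $\hat n$. More general eigenstates of the form $\int d\hat n\, a(\hat n)\psi_{\hat n}$ then inherit the bound $\|a\|_1 B_{E,V}$. The plan is to recast the eigenvalue equation as a Lippmann--Schwinger integral equation on the ball $B_R=\{|x|\le R\}$ and apply the Fredholm alternative there.

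\textbf{Step 1: integral equation.} Let $G_k$ be the outgoing free Green's function of $E+\nabla^2/2M$. In three dimensions $G_k(x)=-\frac{M}{2\pi}\frac{e^{ik|x|}}{|x|}$; in $\mathbb R^N$ it is a Hankel function with a $|x|^{2-N}$ singularity (logarithmic for $N=2$). Since $V$ is supported in $B_R$, the radiation condition built into the decomposition \eqref{ss-decomposition} shows that $\psi$ satisfies
\begin{equation}
	\psi(x) = e^{ik\hat n\cdot x} + \int_{B_R} G_k(x-y)\,V(y)\,\psi(y)\,dy .
\end{equation}
To justify this, I would observe that $\psi$ minus the right-hand side is a free eigenstate with no ingoing part and purely outgoing asymptotics, and is therefore zero by the flux argument of Theorem~\ref{thm:uniqueness}.

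\textbf{Step 2: reduction to the ball.} Restricting $x$ to $B_R$ gives $(1-K)\psi=e^{ik\hat n\cdot x}$ on $B_R$, where $(K\phi)(x)=\int_{B_R}G_k(x-y)V(y)\phi(y)\,dy$. The kernel singularity is locally integrable and $|V|\le V_0$, so $K$ is a compact operator on $L_\infty(B_R)$, in fact mapping it into continuous functions. Once $\|\psi\|_{L_\infty(B_R)}$ is controlled, the formula of Step 1 bounds $\psi$ everywhere:
\begin{equation}
	\sup_{x}|\psi(x)| \le 1 + V_0\,\sup_{x}\int_{B_R}|G_k(x-y)|\,dy\;\|\psi\|_{L_\infty(B_R)} .
\end{equation}
Here $\sup_x\int_{B_R}|G_k(x-y)|\,dy$ is finite and depends only on $k$, $R$ and $N$.

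\textbf{Step 3: invertibility, the main obstacle.} By the Fredholm alternative, $1-K$ has a bounded inverse on $L_\infty(B_R)$ if and only if it is injective. Suppose $K\phi=\phi$. Extending $\phi$ by the same integral formula to all of $\mathbb R^N$ gives an eigenstate with no ingoing wave and purely outgoing asymptotics. The flux argument of Theorem~\ref{thm:uniqueness} then forces its asymptotic amplitude to vanish, so $\phi$ decays faster than $r^{-(N-1)/2}$ outside $B_R$.

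To conclude $\phi\equiv 0$, I need more than the flux argument. I would use Rellich's lemma: outside $B_R$ the extension is a free Helmholtz solution decaying that fast, so it vanishes identically there. Unique continuation for $\nabla^2\phi=2M(V-E)\phi$ with bounded $V$ then propagates $\phi=0$ into the ball. This absence of embedded eigenvalues with compact support is the one genuinely nontrivial analytic input. Since the paper itself remarks that the full scattering state is unique, I would cite it rather than reprove it.

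\textbf{Step 4: uniformity in direction.} With $1-K$ invertible, $\|\psi\|_{L_\infty(B_R)}\le\|(1-K)^{-1}\|$, because the source $e^{ik\hat n\cdot x}$ has unit sup norm. The operator $K$ does not depend on $\hat n$, so this bound is already uniform over all directions. Combining it with Step 2 yields $B_{E,V}=1+V_0\,C_{k,R,N}\|(1-K)^{-1}\|$.
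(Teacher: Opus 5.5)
Your argument is correct, and it takes a genuinely different route from the paper. The paper disposes of the lemma in one sentence, by appeal to elliptic regularity and Sobolev embedding. That route is purely local: for $|V|\le V_0$ it gives $\sup_{B_1(x_0)}|\psi|\le C\,\|\psi\|_{L_2(B_2(x_0))}$, with $C$ depending only on $V_0$, $E$, $M$, $N$. So it shows every eigenstate is locally bounded, and globally bounded once scattering asymptotics are assumed. But the constant depends on $\psi$ through its mass near the potential, and nothing in that argument ties this mass to the unit incident wave.

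Your Lippmann--Schwinger/Fredholm formulation supplies exactly this link, $\psi|_{B_R}=(1-K)^{-1}e^{ik\hat n\cdot x}$. Because $K$ does not depend on $\hat n$, the bound is automatically uniform over incident directions, which is what the finiteness of $\Xi$ in Theorem~\ref{thm:stable-crosssection} (and of $D$ in Theorem~\ref{thm:stable}) really needs. Existence and full uniqueness of scattering states come out as byproducts. The price is the nontrivial input of Rellich's lemma and unique continuation, and a non-explicit constant $\|(1-K)^{-1}\|$.

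Your reinterpretation of the statement is forced rather than a matter of convention. For $N\ge2$, even with $V=0$, $e^{\kappa x_2+i\sqrt{k^2+\kappa^2}\,x_1}$ is an unbounded eigenstate with eigenvalue $E$. So ``any eigenstate'' has to mean scattering states and their superpositions with integrable weights.

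Your formulation also pays a further dividend. $K_V$ is a fixed integral operator $G$ composed with multiplication by $V$. $G$ turns bounded weak-$*$ convergent sequences into uniformly convergent ones, by pointwise convergence plus Arzel\`a--Ascoli. The class of real $V$ with $|V|\le V_0$ supported in $B_R$ is weak-$*$ sequentially compact. Hence a contradiction argument upgrades your injectivity result to a bound on $\|(1-K_V)^{-1}\|$ uniform over that class: normalized $\phi_n$ with $(1-K_{V_n})\phi_n\to0$ would converge along a subsequence to some $\phi$ with $\|\phi\|_\infty=1$ and $K_V\phi=\phi$, for a weak-$*$ limit $V$. This gives, non-constructively, the $V$-independent constant asked for in the paper's subsequent conjecture.

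Three small points need tightening.
\begin{itemize}
\item In Step~1 the flux argument only removes the leading $r^{-(N-1)/2}$ part of the free remainder. To conclude that it vanishes identically you need one more step. One option is a spherical-harmonic expansion: each component of an entire free solution is a regular Bessel standing wave with equal ingoing and outgoing parts. The other is the Rellich argument you already use in Step~3.
\item Applying $(1-K)^{-1}$ on $L_\infty(B_R)$ presupposes $\psi\in L_\infty(B_R)$. This is precisely what local elliptic regularity (the paper's ingredient) provides, or you can bootstrap it from the smoothing of $G$.
\item The paper merely asserts, without proof, that the full scattering state is unique. Cite Rellich's lemma and unique continuation for $|\nabla^2\phi|\le C|\phi|$ directly rather than that remark.
\end{itemize}
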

A proof for this result, follows from the standard theory of elliptic regularity (establishing that $\psi$ has smooth derivatives of arbitrarily high order) and Sobolev inequalities~\cite{evans2022partial}.

Putting this together, we find that the asymptotic behavior of scattering states is stable relative to the $L_1$ norm of the Schr\"odinger violation. Absent a determination of the coefficient of stability, the key takeaway is summarized as follows:
\begin{theorem}[Stability of scattering]\label{thm:stable}
	Let $V(x)$ be a local (i.e.~exponentially decaying) potential, and $\hat H$ the corresponding Hamiltonian for a particle of mass $M$. There exists a sequence of finite constants $C_\ell$ such that, for any approximate scattering state $\tilde\psi$ with asymptotic outgoing behavior $r^{-1} e^{ikr} \tilde f(x/|x|)$ and an ingoing behavior $e^{ikx}$, the inferred partial wave amplitudes have error bounded by
	\begin{equation}
		|f_\ell - \tilde f_\ell| \le C_\ell \|(E-\hat H)\tilde \psi\|_1
		\text.
	\end{equation}
	Similarly there exists a finite constant $D$ such that the approximated cross section obeys
	\begin{equation}
		|\sqrt{\tilde \sigma} - \sqrt{\sigma}| \le D \|(E-\hat H)\tilde \psi\|_1
		\text.
	\end{equation}
\end{theorem}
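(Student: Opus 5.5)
The plan is to assemble Theorem~\ref{thm:stable} from Theorems~\ref{thm:stable-phaseshifts} and~\ref{thm:stable-crosssection}, whose constants are already explicit up to the $L_\infty$ norms of certain exact eigenstates. Once those norms are known to be finite, the theorem follows with
\begin{equation}
	C_\ell = \frac{M}{2\pi(2\ell+1)}\|\xi_\ell\|_\infty,
	\qquad
	D = \frac{M\Xi}{2\sqrt\pi}.
\end{equation}
So the whole task is to show two things: that $\|\xi_\ell\|_\infty<\infty$ for each $\ell$, and that $\Xi = \sup_{\hat n}\|\xi_{\hat n}\|_\infty < \infty$. Here $\xi_{\hat n}$ is the exact scattering state with ingoing wave $e^{i|k|\hat n\cdot x}$. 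Neither quantity depends on $\tilde\psi$.

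For a single eigenstate I will split space into a large ball $|x|\le R_0$ and its exterior.
\begin{itemize}
	\item \emph{Inside the ball.} I invoke Lemma~\ref{lem:bounded}, or rather its proof via elliptic regularity and Sobolev embedding, which gives local bounds depending only on $E$, $\sup|V|$ and the local $L_2$ norm of the eigenstate. For an exponentially decaying $V$ that is locally bounded, this gives a finite bound on $\sup_{|x|\le R_0}|\xi|$.
	\item \emph{Outside the ball.} I use the asymptotic form. For $\xi_\ell$ this is the explicit combination of $e^{\pm ikr}/r$ times $P_\ell$ displayed in the proof of Theorem~\ref{thm:stable-phaseshifts}, plus $O(r^{-2})$. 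For $\xi_{\hat n}$ it is $e^{ikx} + r^{-1}e^{ikr}f_{\hat n} + O(r^{-2})$. Both are bounded once $R_0$ is large.
\end{itemize}
Lemma~\ref{lem:bounded} is stated for compactly supported $V$, so I must extend it to exponentially decaying $V$. The interior estimate is purely local and goes through unchanged. The only question is whether the $O(r^{-2})$ remainders are uniformly controlled, which I would take from the standard Lippmann--Schwinger representation $\xi = \xi_{\mathrm{in}} - 2M\int G_0(x-y)V(y)\xi(y)\,dy$. Since $|G_0(x-y)|\le (4\pi|x-y|)^{-1}$ and $V$ decays exponentially, this representation bounds the scattered part of $\xi$ on the exterior in terms of its interior size.

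The main obstacle is uniformity in direction, that is, showing $\Xi<\infty$. Finiteness of $\|\xi_{\hat n}\|_\infty$ for each fixed $\hat n$ does not by itself bound the supremum. I would try a compactness argument on $S^2$, using the Lippmann--Schwinger equation $\xi_{\hat n} = (1+2MG_0^+V)^{-1}e^{ik\hat n\cdot x}$ in a weighted space. The steps are:
\begin{itemize}
	\item $\hat n\mapsto e^{ik\hat n\cdot x}$ is continuous into a weighted $L_\infty$ space such as $\langle x\rangle^{-s}L_\infty$.
	\item The operator $(1+2MG_0^+V)^{-1}$ is bounded there, by the Fredholm alternative together with the uniqueness result, Theorem~\ref{thm:uniqueness}. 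Uniqueness rules out a nontrivial outgoing solution with no ingoing wave, so $1+2MG_0^+V$ has trivial kernel.
\end{itemize}
It would follow that $\hat n\mapsto\xi_{\hat n}$ is continuous in the sup norm on compacts, hence bounded on the compact set $S^2$. The tail is controlled as above.

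The remaining delicate point is this application of Theorem~\ref{thm:uniqueness}. It only shows that the outgoing amplitude $f_\phi$ vanishes, so I also need the stronger uniqueness remarked after its proof: an energy-$E>0$ solution with $f_\phi=0$ vanishes identically, a Rellich-type statement. If that is unavailable I would restrict to $V$ for which it is known. For $C_\ell$ the issue does not arise: each $\xi_\ell$ is a single function, and its finiteness follows directly from the ODE in $r$, namely regularity at the origin together with bounded asymptotics.
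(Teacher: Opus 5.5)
Your assembly is the same as the paper's. Theorem~\ref{thm:stable} gets no separate proof there: it is obtained by combining Theorems~\ref{thm:stable-phaseshifts} and~\ref{thm:stable-crosssection}, with the same $C_\ell$ and $D$ you wrote, with Lemma~\ref{lem:bounded}. Where you differ is the finiteness input, and that is the only new content of the theorem.

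The paper takes both $\|\xi_\ell\|_\infty<\infty$ and $\Xi<\infty$ from Lemma~\ref{lem:bounded}. That lemma asserts a single constant $B_{E,V}$ for every energy-$E$ eigenstate, so $\Xi\le B_{E,V}$ would be immediate. But the lemma has several limits:
\begin{itemize}
\item it is stated only for bounded, compactly supported $V$;
\item the regularity argument behind it gives only local bounds in terms of local $L_2$ norms, as you noticed;
\item it cannot hold for literally all eigenstates, e.g.\ under rescaling, or for $V=0$ with $e^{i\zeta\cdot x}$, $\zeta\in\mathbb C^3$, $\zeta\cdot\zeta=k^2$.
\end{itemize}
So the restriction to normalized scattering states, the extension to exponential tails, and the uniformity over $\hat n$ are all left implicit in the paper. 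Your Lippmann--Schwinger/Fredholm argument supplies exactly these.

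The Rellich-type input you flag is standard here, so you need not restrict the class of $V$. A kernel element of $1+2MG_0^+V$ has vanishing outgoing amplitude by the flux argument of Theorem~\ref{thm:uniqueness}. It is therefore $O(r^{-2})$ and square integrable. Kato's theorem on the absence of positive eigenvalues for $V=o(r^{-1})$, together with unique continuation, then forces it to vanish.

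Two refinements follow.

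\emph{Compactness over $S^2$ is not needed for the bound.} The operator $2MG_0^+V$ is already compact on $L_\infty$ itself: the image of the unit ball is equicontinuous and uniformly $O(|x|^{-1})$. Hence $\Xi\le\|(1+2MG_0^+V)^{-1}\|_{L_\infty\to L_\infty}$, simply because every plane wave has unit sup norm. Continuity in $\hat n$, which does need your weight, is still worth keeping. It makes the superposition $\xi=\frac{|k|}{4\pi}\int f^{(\phi)}(\hat n)\,\xi_{\hat n}\,d\hat n$ in the proof of Theorem~\ref{thm:stable-crosssection} well defined.

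\emph{The exterior estimate needs rewording.} Since $V$ is not compactly supported, the Lippmann--Schwinger integral at an exterior point runs over all $y$. It therefore does not bound the scattered part ``in terms of its interior size''. It closes instead through $\sup_x\int|x-y|^{-1}|V\xi|\,dy\le 2\pi\|V\xi\|_\infty+\|V\xi\|_1$, which is finite once $\xi$ lies in your weighted space.
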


In this section we have limited discussion to the case of a single particle interacting with a central potential, but by the usual reduced-mass formalism, this is equivalent to considering two particles interacting with a potential which depends only on the distance between them. We will generalize this to the case of several particles interacting simultaneously in \secref{many}.

\section{Stability estimates}\label{sec:stability}
Lemma~\ref{lem:stable} requires that a pointwise bound on the magnitude of the eigenstate be known, and while Lemma~\ref{lem:bounded} guarantees that such a bound exists, it provides no guidance on the size of $\max |\xi|$. Furthermore, Lemma~\ref{lem:bounded} is established for each energy and potential separately. It does not guarantee that any single such bound holds for an entire class of potentials. For example one might conjecture that all bounded potentials of bounded support have scattering states respecting the same $L_\infty$ bound:
\begin{conjecture} Let $V$ obey $|V| \le V_0$, and $V(x) = 0$ when $|x| \ge R_0$, for some finite $V_0,R_0$. Then any scattering state $\psi$ of $\hat H = \frac{\hat p^2}{2M} + V(\hat x)$ obeys $|\psi| \le B_{V_0,R_0}$, for a universal constant $B_{V_0,R_0}$ which is independent of the details of $V$ (but may depend on the incident momentum and mass).
\end{conjecture}

This conjecture is phrased in terms of the $L_\infty$ norm of a scattering state, which is not a ``physical'' quantity: there is no experiment which directly measures it. However, if true, this conjecture immediately implies the existence of a universal bound on the cross section of $V_0,R_0$-bounded potentials. Similarly, as both the cross section and the $L_\infty$ norm $\|\psi\|_\infty$ blow up near a resonance, this conjecture would imply a maximum lifetime of unstable states in bounded potentials. This latter consequence is easily seen to be true in the semiclassical limit: the WKB tunneling rate is only exponential in $V_0$ and $R_0$.

To our knowledge, this conjecture is neither proven nor refuted in three dimensions. In one dimension such bounds are easily obtained for local and bounded potentials.
\begin{lemma}\label{lem:1dbound}
	Let $V(x)$ be a potential bounded by $|V(x)| \le V_0$, and which vanishes outside the interval $[-L/2,L/2]$, and let $\psi$ be a real-energy eigenstate, of positive energy $E$, of the corresponding Hamiltonian $\hat H = \frac{\hat p^2}{2M} + V(\hat x)$. Furthermore require that on the right half-line $[L/2,\infty)$, the wavefunction is $\psi = A e^{ikx}$ for some $|A| \le 1$. Then the eigenstate $\psi$ satisfies the pointwise estimate
	\begin{equation}
		|\psi| \le \sqrt{1 + k^2} e^{L\max\{2M(V_0 + E),1\}}
		\text.
	\end{equation}
\end{lemma}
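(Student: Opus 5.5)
The plan is to treat the eigenvalue equation as a first-order linear ODE system and apply Gr\"onwall's inequality across the interval where the potential is supported. Write $\Psi(x) = (\psi(x), \psi'(x))^T$. The equation $-\psi''/2M + V\psi = E\psi$ becomes
\begin{equation}
	\Psi'(x) = A(x)\Psi(x), \qquad A(x) = \begin{pmatrix} 0 & 1 \\ 2M(V(x)-E) & 0\end{pmatrix}.
\end{equation}
The operator norm of $A(x)$ is $\max\{1, 2M|V(x)-E|\} \le \max\{2M(V_0+E),1\}$, which is exactly the rate appearing in the exponent.

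\textbf{Region $x \ge L/2$.} The wavefunction is $\psi = Ae^{ikx}$, so $|\psi| = |A| \le 1$ there. At the matching point $x=L/2$ we also have $|\psi'| = k|A| \le k$, so $|\Psi(L/2)| \le \sqrt{1+k^2}$.

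\textbf{Region $[-L/2, L/2]$.} Integrate the system leftward from $L/2$. The standard estimate $\frac{d}{dx}|\Psi|^2 \ge -2\|A\||\Psi|^2$ (equivalently, Gr\"onwall applied to $|\Psi(L/2-t)|$ as a function of $t$) gives
\begin{equation}
	|\psi(x)| \le |\Psi(x)| \le \sqrt{1+k^2}\, e^{(L/2 - x)\max\{2M(V_0+E),1\}} \le \sqrt{1+k^2}\, e^{L\max\{2M(V_0+E),1\}}
\end{equation}
on the whole support of $V$. This is the main estimate. Here I only need $V$ bounded, not smooth; $\psi$ is $C^1$ with $\psi''$ in $L^\infty$, so the ODE holds in the Carath\'eodory sense.

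\textbf{Region $x \le -L/2$.} This is the step I expect to be the main obstacle. There the solution is free, $\psi = ae^{ikx} + be^{-ikx}$, and its supremum over the region is $|a|+|b|$. Two facts are available:
\begin{itemize}
	\item Constancy of the Wronskian, i.e.\ flux conservation as in the proof of Theorem~\ref{thm:uniqueness}, gives $|a|^2 - |b|^2 = |A|^2$.
	\item $|\psi|^2 + |\psi'|^2/k^2 = 2(|a|^2+|b|^2)$ is constant, and can be evaluated at $x=-L/2$ using the Gr\"onwall bound.
\end{itemize}
Together these bound $|a|+|b|$ by the value of $|\Psi(-L/2)|$, up to a factor involving $1/k$.

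To avoid that factor I would first try running the Gr\"onwall argument on the rescaled vector $(\psi, \psi'/k)$, or on $(k\psi, \psi')$. I would check whether the resulting matrix norm still fits under $\max\{2M(V_0+E),1\}$, using $k^2 = 2ME$. If it does not, I would use the slack in the exponent: the free region only needs the factor $e^{L/2}$-type headroom left over from bounding $(L/2-x)$ by $L/2$ rather than $L$. That headroom can absorb the constant from converting $|\Psi(-L/2)|$ into $|a|+|b|$.

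Taking the maximum of the bounds over the three regions then yields the lemma.
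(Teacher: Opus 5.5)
Your argument on $[-L/2,L/2]$ and on $x\ge L/2$ is the paper's proof in substance. It uses the same first-order system, the same bound $\|A\|\le\max\{1,2M(V_0+E)\}$, and the same Gr\"onwall step from the data at $x=L/2$, whose norm is at most $\sqrt{1+k^2}$. The paper stops there: it bounds $\psi$ ``everywhere in the interval'' and never discusses $x<-L/2$.

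The gap is in your third step. Neither of your proposed ways of removing the $1/k$ works, and no way can.
\begin{itemize}
\item There is no slack in the exponent. At $x=-L/2$ one has $L/2-x=L$, so the Gr\"onwall factor is already the full $e^{L\max\{2M(V_0+E),1\}}$. In any case, a fixed multiplicative factor cannot absorb a $1/k$ that diverges as $k\to0$ while the right-hand side stays bounded.
\item The rescaled vector $(\psi,\psi'/k)$ satisfies a system with off-diagonal entries $k$ and $2M(V-E)/k$. Its norm, $\max\{k,2M\sup|V-E|/k\}$, can exceed $\max\{1,2M(V_0+E)\}$ when $k<1$. Moving a $1/k$ into the exponent is also far worse than the prefactor you were trying to avoid.
\end{itemize}

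The $1/k$ is genuine: the lemma as stated is false on the left half-line at low energy. Take the square barrier $V=V_0$ on $[-L/2,L/2]$, with $|A|=1$ and $E<V_0$. The standard transmission formula gives $|a|^2=1+V_0^2\sinh^2(\kappa L)/\bigl(4E(V_0-E)\bigr)$ with $\kappa=\sqrt{2M(V_0-E)}$. Your flux identity gives $|b|^2=|a|^2-1$. Hence
\[
\sup_{x\le-L/2}|\psi|=|a|+|b|>2|b|=\frac{V_0\sinh(\kappa L)}{\sqrt{E(V_0-E)}},
\]
which diverges like $E^{-1/2}$, whereas the claimed bound tends to $e^{L\max\{2MV_0,1\}}$. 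For example, $M=1/2$, $L=V_0=1$, $E=0.01$ gives $|a|+|b|\approx 11.8$, against a claimed bound of about $2.76$.

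What your two identities do establish is
\[
\sup_{x\le-L/2}|\psi|\le\sqrt{|\psi(-L/2)|^2+k^{-2}|\psi'(-L/2)|^2}\le\max\{1,k^{-1}\}\sqrt{1+k^2}\,e^{L\max\{2M(V_0+E),1\}}.
\]
So the correct statement is the lemma as written on $x\ge-L/2$, which is exactly what the paper's argument covers, plus the extra factor $\max\{1,k^{-1}\}$ on $x<-L/2$. You should present the result in that form rather than try to remove the factor.
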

\proof
The one-dimensional Schr\"odinger equation is a second-order ODE, and therefore is equivalent to the following two-dimensional first-order ODE:
\begin{equation}
	\frac{d}{dx} f \equiv
	\frac{d}{dx}
	\left(\begin{matrix}\psi\\\psi'\end{matrix}\right)
		=
		\left(
		\begin{matrix}
			0 & 1 \\
			2M(V(x) - E) & 0
		\end{matrix}
		\right)
	\left(\begin{matrix}\psi\\\psi'\end{matrix}\right)
		\equiv
		A f
	\text.
\end{equation}
Here we have labelled the two-component vector of $\psi$ and $\psi'$ by $f$. The solution is defined by the boundary conditions at $x = \frac L 2$. For any $C$ greater than or equal to the spectral norm of $A$ ($C \ge \|A\|$), the solution is bounded everywhere in the interval by
\begin{equation}
	|\psi| \le \|f\| \le e^{C L} \|f(x=L/2)\|\text.
\end{equation}
Evaluating $\|A\| \le \max \{1,2M(V_0 + E)\}$, and noting that $\|f\| \le \sqrt{1 + k^2}$ at $x=\frac L 2$, completes the proof.
\qed

As mentioned in \secref{prior}, Bardsley et al obtained a bound for sufficiently weak potentials via the Lippmann-Schwinger equation~\cite{bardsley1972minimum}, which we reproduce here for completeness, specialized to the case of s-wave scattering in three dimensions.
\begin{theorem}Let $\psi(r)$ be an s-wave scattering state from a central potential $V(r)$. Then the amplitude $|\psi(r)|$ is bounded by $\frac 1 {1 - \gamma_*}$ whenever $\gamma_* < 1$, where
	\begin{equation}
	\gamma_* := \max_r \int dr'\,|G_0(r,r')V(r')|
		\text.
	\end{equation}
\end{theorem}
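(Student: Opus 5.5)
The natural route is the s-wave Lippmann--Schwinger equation followed by a sup-norm contraction estimate. With $G_0(r,r')$ the outgoing free radial Green's function, incorporating the factor $2M$ and the measure conventions so that the s-wave scattering state obeys
\begin{equation}
	\psi(r) = \psi_0(r) + \int_0^\infty dr'\, G_0(r,r') V(r') \psi(r')\text,
\end{equation}
the free s-wave solution is $\psi_0(r) = \frac{\sin kr}{kr}$ (or the corresponding reduced form), which satisfies $|\psi_0| \le 1$, as already noted before Lemma~\ref{lem:bounded}. The first step is to verify, from the explicit kernel (proportional to $e^{ikr_>}\sin(kr_<)$ with suitable powers of $r,r'$), that the solution of this integral equation has the outgoing asymptotics required by the decomposition \eqref{ss-decomposition}. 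Theorem~\ref{thm:uniqueness} then identifies it with the scattering state $\psi$.

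Define the operator $(K\phi)(r) = \int dr'\, G_0(r,r')V(r')\phi(r')$ acting on bounded functions of $r$. Its operator norm on $L_\infty$ is exactly $\sup_r \int dr'\,|G_0(r,r')V(r')| = \gamma_*$. When $\gamma_* < 1$, the Neumann series $\psi = \sum_{n\ge 0} K^n \psi_0$ converges in $L_\infty$, and it gives
\begin{equation}
	\|\psi\|_\infty \le \sum_n \gamma_*^n \|\psi_0\|_\infty \le \frac{1}{1-\gamma_*}\text.
\end{equation}
Equivalently, taking the supremum of the integral equation directly gives $\|\psi\|_\infty \le 1 + \gamma_*\|\psi\|_\infty$, which rearranges to the claimed bound once $\|\psi\|_\infty$ is known to be finite.

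The main obstacle is justifying this last step: knowing that $\|\psi\|_\infty<\infty$, and that the Neumann-series solution is the scattering state rather than some other solution. I would handle it by constructing the bounded solution via the contraction mapping theorem on $L_\infty$, which is valid because $\gamma_*<1$. I would then check that this solution is an eigenstate with outgoing asymptotics, using the decay of $V$ and the form of $G_0$, and invoke uniqueness to conclude it equals $\psi$. Where the paper's short-range setting already guarantees boundedness, for instance through Lemma~\ref{lem:bounded} for bounded compactly supported $V$, the direct rearrangement suffices, and the only remaining care is matching normalization conventions between $G_0$ and $\psi_0$.
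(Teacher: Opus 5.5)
Your proposal is correct and is essentially the paper's proof. The paper writes the Lippmann--Schwinger equation, evaluates it at a maximizer $r_*$ of $|\psi|$, and rearranges $\max|\psi| \le 1 + \gamma_*\max|\psi|$, which is exactly your direct rearrangement. Your contraction/Neumann-series step adds the finiteness of $\|\psi\|_\infty$ that the paper tacitly assumes when it chooses $r_*$. To identify that bounded solution with $\psi$ you need uniqueness of the whole scattering state, not only the asymptotic statement of Theorem~\ref{thm:uniqueness}, but for s-waves this follows immediately from ODE uniqueness.
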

\proof
Via Lippmann-Schwinger, the s-wave scattering state $\psi(r)$ obeys
\begin{equation}
	\psi(r)
	= \psi_0(r)
	+ \int G_0(r,r') V(r') \psi(r')\,dr
	\text.
\end{equation}
Here $G_0$ and $\psi_0$ are the Green's function and scattering state in the absence of the potential, respectively; namely
\begin{equation}
	G_0(r,r') =
	\frac{-M}{2\pi} \frac{e^{ik|r'-r|}}{|r'-r|}
	\text{ and }
	\psi_0(r) = \sin k r
	\text.
\end{equation}
Define $r_*$ to be the coordinate at which $|\psi(r)|$ is maximized. Applying the Lippmann-Schwinger equation at $r_*$ we establish the chain of inequalities
\begin{equation}
	\begin{aligned}
	\max_r |\psi(r)|
	= |\psi(r_*)|
		&\le |\sin k r_*| + \left|\int dr'\, G_0(r_*, r') V(r') \psi(r')\right|\\
		&\le 1 + |\psi(r_*)| \int dr'\, |G_0(r_*, r') V(r')| \\
		&\le 1 + \left[\max_r |\psi(r)|\right] \max_r \int dr'\,|G_0(r,r') V(r')|
	\text.
	\end{aligned}
\end{equation}
Defining $\gamma_* = \max_r \int dr'\,|G_0(r,r')V(r')|$, it follows that the maximum amplitude of the wavefunction is bounded according to
\begin{equation}
	\max_r |\psi(r)| \le \frac{1}{1 - \gamma_*}
\end{equation}
whenever $\gamma_* < 1$.
\qed

Neither of the above two bounds is much use in general. The first requires the problem to be effectively one-dimensional: it can be applied to three-dimensional scattering only in the spherically symmetric case, by reducing to the radial Schr\"odinger equation. The second requires that the potential be sufficiently weak (and moreover requires the difficult computation of $\gamma_*$ in order to be useful). Obtaining useful pointwise bounds on positive-energy eigenstates for reasonably general potentials remains, as far as we know, an open problem. 

\section{General potentials}\label{sec:potentials}
Realistic physical Hamiltonians are not as limited as those considered in the sections above. The purpose of this section is to show that the results of \secref{states} apply to such Hamiltonians without significant modification. In the first subsection below we consider both internal degrees of freedom and momentum-dependent potentials; in the second subsection below we consider long-range Coulomb interactions.

\subsection{Momentum and spin dependence}\label{sec:momentum}

A central motivation for this work is the nuclear scattering problem, considered in a nuclear effective theory. In nuclear effective theories the mesons have been integrated out to obtain non-local, but short-range, forces between nucleons. An early example of high-quality nuclear forces, capable of accurately reconstructing two-nucleon scattering data as well as light nuclear binding energies, are the Argonne family of potentials~\cite{Wiringa:1994wb,Pudliner:1997ck,Wiringa:2002ja}. More recent works typically use forces based on chiral effective theories: see~\cite{Machleidt:2011zz} for a review.

Common to all realistic nuclear potentials is that the potential has non-trivial dependence on both momentum and on internal degrees of freedom (e.g.~spin and isospin). As a motivating example consider the following one-body Hamiltonian, reminiscent of the interaction between two nucleons:
\begin{equation}
	\hat H = \frac{\hat p^2}{2M} + e^{-|x|^2/2} \left(g_1 \hat L \cdot \hat \sigma+ g_2 \hat L^2\right)\text.
\end{equation}
The wavefunctions on which this Hamiltonian acts have two complex components for each position; that is, $\psi_s(x)$ is a complex number for any $x \in \mathbb R^3$ and $s \in \{0,1\}$. The index represents spin and is acted on by the three Pauli matrices $\sigma_x,\sigma_y,\sigma_z$.

The potential---here consisting of all terms excepting the one proportional to $\hat p^2$---is not diagonal in position space, but is also not an entirely arbitrary operator. The conservation laws have not been affected: angular momentum is still conserved, and in the full two-body problem, momentum and the center of mass are both conserved. Additionally, the potential is short range, falling off rapidly with distance from the origin (or distance between the two particles). In fact, if we neglect electromagnetic effects (to be addressed in the next subsection), the nuclear potential decays exponentially with distance.

Because the potential is short-range, the asymptotic structure of a scattering state is not modified by the inclusion of momentum-dependent terms. Additionally, although the kinetic term $\propto \hat p^2$ is not self-adjoint on the vector space of scattering states, exponentially decaying momentum-dependent potentials still are. For example, for any wavefunctions $f$ and $g$ growing at most polynomially with $|x|$:
\begin{equation}
	\int f^*(x) e^{-|x|} \hat L^2 g(x)
	= \int \left(\hat L^2 f^*(x)\right) e^{-|x|} g(x)
	\text.
\end{equation}
Boundary terms do not cause a failure of short-range operators to be self-adjoint: an operation ``naively'' self-adjoint on the space of square-integrable wavefunctions is also self-adjoint on scattering states. This observation will be repeatedly used below.

For arbitrary momentum-dependent potentials, the probability current as defined by \eqref{probability-flux} does not always obey the continuity equation, which reads:
\begin{equation}
	\nabla \cdot j_\psi(x) = \frac 1 M \Im \psi^*(x) \hat H \psi(x)
	\text.
\end{equation}
A suitable probability current can be found, but it suffices here to note that while $j$ as defined by \eqref{probability-flux} is not locally conserved, due to the rapid decay of the potential with distance, it still obeys $\int \nabla \cdot j = 0$ in an eigenstate. Therefore scattering remains unique when the potential is momentum-dependent. The proof closely follows the original proof in Theorem~\ref{thm:uniqueness}, and either choice of current suffices as long as the potential decays rapidly at large distances.
\begin{theorem}[Uniqueness of momentum-dependent scattering]
	Let $\psi^{(1)}$ and $\psi^{(2)}$ be $k$-component scattering states, with ingoing wave $u e^{ikx}$ for some complex $k$-vector $u$ obeying $|u| = 1$.
	Define the Hamiltonian $\hat H = \frac{\hat p^2}{2M} + \hat V$, where the potential $\hat V$ is self-adjoint on wavefunctions which are $O(|x|^{-1})$. If each scattering state obeys $(E - \hat H)\psi^{(a)} = 0$, then $\psi^{(1)}_{\mathrm{out}} = \psi^{(2)}_{\mathrm{out}}$.
\end{theorem}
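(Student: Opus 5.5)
We follow the proof of Theorem~\ref{thm:uniqueness}, replacing local current conservation by an integrated statement that uses only the self-adjointness of $\hat V$ on decaying wavefunctions. Set $\phi := \psi^{(1)} - \psi^{(2)}$. Because the ingoing waves $u e^{ikx}$ agree, $\phi$ has no ingoing plane wave. It is therefore $O(|x|^{-1})$, with asymptotic form
\begin{equation}
	\phi(x) = \frac{e^{ikr}}{r} f_\phi\Big(\frac{x}{r}\Big) + O(r^{-2})\text,
\end{equation}
where $f_\phi$ now takes values in $\mathbb C^k$. By linearity, $(E-\hat H)\phi = 0$. Take the flux $j_\phi = \frac1M \Im \phi^\dagger \nabla\phi$ as in \eqref{probability-flux}, with the spin components summed.

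First, we integrate $\phi^\dagger (E-\hat H)\phi = 0$ over the ball $|x|\le R$ and take its imaginary part. The $E$ term is real. The kinetic term gives
\begin{equation}
	\Im\int_{|x|\le R}\phi^\dagger\Big(-\frac{\nabla^2}{2M}\Big)\phi = -\frac12\oint_R \hat r\cdot j_\phi\text,
\end{equation}
by the same Green's identity used in \eqref{uniqueness-divj}. The potential term contributes $\Im\int_{|x|\le R}\phi^\dagger\hat V\phi$. We will then show that this term tends to zero as $R\to\infty$. Since $\phi = O(|x|^{-1})$, the hypothesis that $\hat V$ is self-adjoint on such wavefunctions gives $\int \phi^\dagger \hat V\phi = \int (\hat V\phi)^\dagger\phi = \big(\int\phi^\dagger\hat V\phi\big)^*$ over all of $\mathbb R^3$, so its imaginary part vanishes. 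We must also know that $\int_{|x|\le R}\phi^\dagger\hat V\phi$ converges to the full integral. This requires that $\hat V\phi$ decay fast enough that $\phi^\dagger\hat V\phi$ is integrable. For the short-range potentials under discussion, exponential decay of the coefficient functions gives this immediately, and we will state it as part of what ``short-range'' means. Consequently $\lim_{R\to\infty}\oint_R \hat r\cdot j_\phi = 0$.

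Second, we evaluate the same flux from the asymptotic form, exactly as in Theorem~\ref{thm:uniqueness}:
\begin{equation}
	\oint_R \hat r\cdot j_\phi = \frac{|k|}{M}\int_{S^2}d\Omega\,|f_\phi|^2 + O(R^{-1})\text,
\end{equation}
where $|f_\phi|^2$ is the squared norm in $\mathbb C^k$. Combining the two steps gives $\int |f_\phi|^2 = 0$, so $f_\phi=0$ almost everywhere. Hence $\psi^{(1)}_{\mathrm{out}}$ and $\psi^{(2)}_{\mathrm{out}}$ agree up to terms of order $O(r^{-2})$. Under the paper's ambiguity convention these terms are absorbed into the bulk, which is the sense of the equality asserted in the statement.

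The main obstacle is the first step: justifying that the boundary contributions from $\hat V$ vanish and that truncating the integral to the ball commutes with the limit. Self-adjointness is a global statement, whereas the flux identity is taken on a finite ball. I would handle this directly. I would write the imaginary part of the full integral as the limit of the ball integrals and note that the difference is a tail integral $\int_{|x|>R}\phi^\dagger\hat V\phi$, which vanishes by the decay of $\hat V\phi$. A genuinely nonlocal $\hat V$ could, however, couple the inside of the ball to far regions. In that case I would fall back on the exponential decay of the kernel, as in the $e^{-|x|}\hat L^2$ example given earlier in this subsection, to bound these cross terms.
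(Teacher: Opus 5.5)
Your proposal is correct and follows essentially the same route as the paper. The paper's proof likewise computes $\nabla\cdot j_\phi = 2\Im\phi^*\hat V\phi$, argues that its integral vanishes because $\hat V$ is self-adjoint on $O(|x|^{-1})$ wavefunctions, and then reads off $\frac{|k|}{M}\int d\Omega\,|f_\phi|^2$ from the asymptotic flux. Your explicit treatment of the ball truncation is extra care the paper omits. The worry about nonlocal cross terms is unnecessary, though: $\hat V\phi$ is a fixed function, so dominated convergence only needs integrability of $\phi^\dagger\hat V\phi$ over $\mathbb R^3$.
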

\proof
Consider the difference $\phi(x) := \psi^{(1)}(x) - \psi^{(2)}(x)$, which is an eigenstate of $\hat H$ with energy $E$. We will use the original probability flux $j_\phi$, defined by
\begin{equation}
	j_\phi(x) = \frac 1 M \Im \phi^* \nabla \phi
	\text.
\end{equation}
We now compute the divergence of $j_\phi$, using the fact that $\phi$ obeys the eigenstate equation $\nabla^2 \phi(x) = 2 M (\hat V - E) \phi(x)$:
\begin{equation}
	\nabla \cdot j_\phi(x)
	=
	\frac 1 M \Im \phi^*(x) \left[2 M (\hat V - E) \right]\phi(x)
	= 2 \Im \phi^*(x) \hat V \phi(x)
\end{equation}
As discussed, the probability flux does not vanish identically. However, as $\hat V$ is a self-adjoint operator over a vector space of which $\phi$ is an element, the integral of this divergence does vanish:
\begin{equation}
	\Im \int \phi^*(x) \hat V \phi(x)
	= \Im \int \hat V \phi^*(x) \phi(x)
	= \Im \left(\int \phi^*(x) \hat V \phi(x)\right)^*
	= 0
	\text.
\end{equation}
Then as in the momentum-independent case, the decompositions of $\psi^{(1)}$ and $\psi^{(2)}$ define corresponding asymptotic functions $f^{(1)}$ and $f^{(2)}$. The difference $\phi$ has the form
\begin{equation}
	\phi(x) = \frac{e^{ikr}}{r} f\big(\frac{x}{r}\big) + O(r^{-2})\text.
\end{equation}
Integrating the probability flux through a sphere of radius $R$, we find
\begin{equation}
	\oint_R d\hat n \cdot j_\phi(x)
	= 4 \pi \int d\Omega\,\frac {|k|} {M} \Big|f\big(\frac{x}{|x|}\big)\Big|^2 + O(R^{-1})
	\text.
\end{equation}
It is immediate that $f = 0$, and therefore $f^{(1)} = f^{(2)}$.
\qed

Lemma~\ref{lem:stable} also applies to the case of a momentum-dependent local potential with no modification needed in the proof. This is again a result of the fact that $\hat V$ is self-adjoint. In the case of internal degrees of freedom---where the wavefunction has multiple components---it is worth restating the lemma, as there are multiple possible notions of the various norms that appear.

On the space of $n$-component wavefunctions $\mathbb R^3 \rightarrow \mathbb C^n$, we define $L_1$ and $L_\infty$ norms by
\begin{equation}
	\|\psi\|_1 = \int
	\sum_i |\psi_i|
	\quad\text{and}\quad
	\|\psi\|_\infty =
	\max_{i,x} |\psi_i(x)|
	\text.
\end{equation}
These definitions blend nicely with the following multi-component generalization of Lemma~\ref{lem:holder}.
\begin{lemma}
	Let $f_i,g_i$ $n$-component complex-valued functions, with $|g_i| \le 1$ almost everywhere and for all $n$. Then the $L_1$ norm of $f$ is bounded by
	\begin{equation}
		\int \sum_i |f_i|
		\ge
		\left|
		\int
		\sum_i f_i g_i
		\right|
	\end{equation}
\end{lemma}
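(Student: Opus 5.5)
The plan is to reduce the statement to the scalar case, Lemma~\ref{lem:holder}, applied to a single function on an enlarged measure space, and then to check the result with a direct pointwise argument. We regard an $n$-component function on $\Omega$ as a scalar function on the product space $\Omega\times\{1,\ldots,n\}$. The measure on this space is the product of the original measure with counting measure on the index set, so that $\int\sum_i$ is exactly integration over the product space. The hypothesis $|g_i|\le 1$ almost everywhere and for every $i$ says precisely that the scalar function $(x,i)\mapsto g_i(x)$ is bounded by $1$ almost everywhere on the product space. Lemma~\ref{lem:holder} then gives the claim immediately.

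The same inequality can also be verified directly, without changing spaces. First apply the triangle inequality for integrals and finite sums:
\begin{equation}
	\left|\int \sum_i f_i g_i\right| \le \int \sum_i |f_i|\,|g_i|\text.
\end{equation}
Next use $|g_i(x)|\le 1$ for almost every $x$ and each $i$. A finite union of null sets is null, so the pointwise bound $\sum_i |f_i||g_i| \le \sum_i |f_i|$ holds almost everywhere. Integrating this bound yields $\int\sum_i |f_i|$.

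The only point needing care is integrability. If $\int\sum_i|f_i|$ is infinite, the statement is trivial. Otherwise each $f_i g_i$ is dominated by $|f_i|$ and is therefore integrable, so the left-hand side is well defined. There is no real obstacle here.

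This choice of norms is what makes the lemma useful later. Because the $L_\infty$ norm is defined as $\max_{i,x}|\psi_i(x)|$, the natural test function in the multi-component analogue of Lemma~\ref{lem:stable} is $g_i=\xi_i^*/\|\xi\|_\infty$, and it satisfies the hypothesis $|g_i|\le 1$ componentwise.
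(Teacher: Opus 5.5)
Your proof is correct and matches what the paper intends: the paper gives no separate proof, presenting the statement as the immediate multi-component extension of the scalar bound in Lemma~\ref{lem:holder}, which is exactly your product-space reduction (counting measure on the component index), or equivalently your pointwise triangle-inequality argument. You also correctly read the hypothesis ``for all $n$'' as the intended ``for all $i$''.
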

This is not the only possible generalization of Lemma~\ref{lem:holder}, but it is sufficient to derive a generalization of Lemma~\ref{lem:stable}. The proof proceeds in the same manner as the original proof, and will not be repeated here.
\begin{lemma}
	Let $\hat V$ be a self-adjoint, short-range potential, $k$ be a momentum and $E$ be a real energy. Additionally, let $\xi$ be a (multi-valued) wavefunction which is an eigenstate of $\hat H = -\frac{\hat\nabla^2}{2M} + \hat V$ with eigenvalue $E$, for which $\|\xi\|_\infty$ as defined above is finite. Then, letting $\psi,\tilde\psi$ respectively be an exact and approximate scattering state each with the usual decomposition, the $L_1$ norm of the Schr\"odinger violation of $\tilde \psi$ is bounded by:
	\begin{equation}
		\|(E-\hat H)\tilde\psi\|_1
		\ge
		\frac{1}{2 M \|\xi\|_\infty}
		\lim_{R\rightarrow\infty}\left|\oint_R
		\left(\xi^* \nabla_r \phi - \nabla_r \xi^* \phi\right)
		\right|
		\text.
	\end{equation}
\end{lemma}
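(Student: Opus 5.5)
The argument follows the proof of Lemma~\ref{lem:stable}, using the multi-component H\"older inequality stated just above in place of Lemma~\ref{lem:holder}. The only new ingredient is that $\hat V$ is no longer multiplicative, so we must check that it does not produce boundary terms.

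\emph{Step 1: reduce to the difference.} Since $\psi$ is an exact eigenstate, the Schr\"odinger violation of $\tilde\psi$ equals that of $\phi \equiv \tilde\psi - \psi$. Hence $\|(E-\hat H)\tilde\psi\|_1 = \|(E-\hat H)\phi\|_1$.

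\emph{Step 2: apply H\"older.} Apply the multi-component H\"older lemma with $f_i = ((E-\hat H)\phi)_i$ and test functions $g_i = \xi_i^*/\|\xi\|_\infty$. The definition $\|\xi\|_\infty = \max_{i,x}|\xi_i(x)|$ guarantees $|g_i|\le 1$ for every component. This gives
\begin{equation}
	\|(E-\hat H)\phi\|_1 \ge \frac{1}{\|\xi\|_\infty}\left|\int \sum_i \xi_i^* \big((E-\hat H)\phi\big)_i\right|\text.
\end{equation}

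\emph{Step 3: split the integral.} Write $E-\hat H = (E + \nabla^2/2M) - \hat V$ and treat the two pieces separately. For the kinetic piece, Green's second identity applied componentwise on a ball of radius $R$ gives
\begin{equation}
	\int_{|x|\le R}\sum_i\left(\xi_i^*\nabla^2\phi_i - \phi_i\nabla^2\xi_i^*\right)=\oint_R \sum_i\left(\xi_i^*\nabla_r\phi_i - \nabla_r\xi_i^*\,\phi_i\right)\text.
\end{equation}
For the potential piece, the self-adjointness of $\hat V$ on polynomially bounded (in particular $O(1)$ and $O(r^{-1})$) wavefunctions gives
\begin{equation}
	\int \xi^\dagger \hat V\phi = \int (\hat V\xi)^\dagger\phi\text.
\end{equation}
Combining the two, the bulk integral $\int \xi^\dagger (E-\hat H)\phi$ becomes $\int \big((E-\hat H)\xi\big)^\dagger\phi$ plus $\frac{1}{2M}$ times the surface term. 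Because $E$ is real and $\xi$ is an eigenstate, $(E-\hat H)\xi = 0$, so only the surface term survives. Substituting into the inequality of Step 2 yields the claimed bound.

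\emph{Main obstacle: justifying the symmetry of $\hat V$ in the limit.} The integrals over all of $\mathbb R^3$ are defined as limits over balls of radius $R$. Truncating a nonlocal $\hat V$ to a ball could in principle generate extra boundary contributions. I would handle this with the short-range assumption. Since $\hat V$ is exponentially localized (its kernel or derivative coefficients decay like $e^{-r/R_0}$), the difference between $\int_{|x|\le R}\xi^\dagger\hat V\phi$ and $\int_{|x|\le R}(\hat V\xi)^\dagger\phi$ is bounded by exponentially small factors times the polynomial growth of $\xi$ and $\phi$. It therefore vanishes as $R\to\infty$. This is exactly the self-adjointness property the paper established for terms like $e^{-|x|}\hat L^2$. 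With that in hand, the limits of the kinetic and potential pieces exist separately, which justifies writing the bound as a limit of surface integrals.
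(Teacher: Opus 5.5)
Your proposal is correct and takes the same route the paper indicates. The paper just says the proof of Lemma~\ref{lem:stable} carries over, using the multi-component H\"older inequality with test function $\xi^*/\|\xi\|_\infty$, and relying on the self-adjointness of $\hat V$ so that only the kinetic term produces a surface term. Your explicit treatment of the $R\to\infty$ limit fills in a detail the paper leaves implicit: both $\int\xi^\dagger\hat V\phi$ and $\int(\hat V\xi)^\dagger\phi$ converge because $\hat V$ is short-range.
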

From this lemma, all subsequent theorems in \secref{states} follow as before.

\subsection{Coulomb potentials}\label{sec:coulomb}

Not all potentials are short-range. The most common long-range potential is the Coulomb potential. Neglecting spin and relativistic effects, we are now concerned with the Hamiltonian of a single particle in a static electric field (equivalent to the interaction of two charged particles):
\begin{equation}\label{eq:H-coulomb}
	\hat H = -\frac{1}{2M} \nabla^2 + \frac{\alpha}{r}
	\text.
\end{equation}

As written, the only remarkable feature of this potential is that it decays quite slowly---slowly enough that, as we will see later, the asymptotic behavior of scattering states has an entirely different structure. However, the same physics may be expressed by a Hamiltonian of the form
\begin{equation}
	\hat H = -\frac{1}{2M} (\nabla - A)^2
	\text,
\end{equation}
with suitably chosen vector potential $A$.
These two are related by a gauge transformation. When considering this second form, the definition of the probability current must be changed. The new probability current is found by applying the same gauge transformation to the usual definition, and reads:
\begin{equation}
	j_{\mathrm{gauge}} = \frac 1 M \Im \psi^* \nabla \psi - \frac 1 M A |\psi|^2
	\text.
\end{equation}
For the purposes of this work we will stick with the form (\ref{eq:H-coulomb}), where no redefinition of the current is needed: $j = \frac 1 M \Im \psi^* \nabla \psi$.

The long-range nature of the Coulomb interaction changes the behavior of eigenstates even far from the origin.
The $\ell$th partial wave has the form, at leading order in $r^{-1}$:
\begin{equation}
	\psi_\ell(r) =
	\frac{e^{\pm i \theta_\ell}}{r}
	+ O(r^{-2})
	\text{ where }
	\theta_\ell = kr - \frac{M\alpha}{k} \log(2kr) - \frac{\ell\pi}{2}
	+ \arg\Gamma \left(\ell + 1 + i \frac{m\alpha}{k}\right)
\end{equation}
Alternatively, we can examing the asymptotic behavior of an eigenstate with an ingoing plane wave and arbitrary outgoing waves. The nature of the \emph{ingoing} wave must change in order to obtain an eigenstate. At leading order in $r^{-1}$ this wavefunction reads~\cite{Mulherin:1970ej,landau2013quantum}
\begin{equation}\label{eq:ss-coulomb-decomposition}
	\psi(x) = e^{i\left(k\cdot x + \frac {M\alpha}{|k|}\log(|k||x| - k\cdot x)\right)}
	+ \frac{e^{i\left(|k||x| - \frac{M\alpha}{|k|}\log(2 |k||x|)\right)}}{|x|} f\left(\frac{x}{|x|}\right)  + O(|x|^{-2})
	\text.
\end{equation}
The outgoing current in this state is infinite, reflecting the infinite cross section of Coulomb scattering.

\begin{theorem}[Uniqueness of Coulomb scattering]
	Let $\psi^{(1)}(x)$ and $\psi^{(2)}(x)$ be two states, each decomposable in the form \eqref{ss-coulomb-decomposition} and each an eigenstate of $\hat H = \nabla^2 / 2M + V(\hat x) + \alpha/|\hat x|$.
If $\psi^{(1)}_{\mathrm{in}} = \psi^{(2)}_{\mathrm{in}}$, then $\psi^{(1)}_{\mathrm{out}} = \psi^{(2)}_{\mathrm{out}} + O(r^{-2})$ almost everywhere.
\end{theorem}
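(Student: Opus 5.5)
The plan is to mirror the proof of Theorem~\ref{thm:uniqueness}, using the ordinary probability current $j_\phi = \frac 1 M \Im \phi^*\nabla\phi$ for the difference $\phi := \psi^{(1)}-\psi^{(2)}$. Since both states are eigenstates of the same Hamiltonian with the same real energy $E$, so is $\phi$. Because the ingoing (distorted plane-wave) pieces agree, the leading ingoing term cancels exactly in $\phi$. At large radius $\phi$ therefore takes the form
\begin{equation}
	\phi(x) = \frac{e^{i\left(kr - \frac{M\alpha}{k}\log(2kr)\right)}}{r} f_\phi\Big(\frac{x}{r}\Big) + O(r^{-2})
	\text,
\end{equation}
where $f_\phi = f^{(1)}-f^{(2)}$.

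The local part of the argument is unchanged from the short-range case. The Coulomb term $\alpha/|x|$ and any short-range local $V$ are real multiplication operators. Hence the divergence computed in Theorem~\ref{thm:uniqueness} gives $\nabla\cdot j_\phi = 2\Im\phi^*(V+\alpha/r-E)\phi = 0$ pointwise for $x\neq 0$. I will first check that the singular point $r=0$ contributes nothing. Near the origin the eigenstate is bounded, with $\psi\sim$ const $+O(r)$ for the $1/r$ potential, so the flux through a small sphere around the origin is $O(\epsilon^2)\to 0$. The divergence theorem on a shell $\epsilon<|x|<R$ then shows that the total flux of $j_\phi$ through the sphere of radius $R$ vanishes for every $R$.

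The remaining step is to evaluate that flux asymptotically. Here the logarithmic phase is the only new feature. Differentiating the phase $\theta(r) = kr - \frac{M\alpha}{k}\log(2kr)$ radially gives $\theta'(r) = k - \frac{M\alpha}{kr}$, so
\begin{equation}
	\hat r\cdot j_\phi = \frac{1}{M}\frac{|f_\phi|^2}{r^2}\left(k - \frac{M\alpha}{kr}\right) + O(r^{-3})
	\text.
\end{equation}
The cross terms between the leading $r^{-1}$ piece and the $O(r^{-2})$ remainder are $O(r^{-3})$, and so are the angular-derivative terms. Integrating over the sphere gives $\frac{k}{M}\int d\Omega\,|f_\phi|^2 + O(R^{-1})$. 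Setting this equal to zero and letting $R\to\infty$ forces $\int d\Omega\,|f_\phi|^2=0$, which means $f_\phi=0$ almost everywhere on the celestial sphere. That gives the stated conclusion, with the ``almost everywhere'' qualifier arising exactly from this $L_2$ vanishing.

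I expect the main obstacle to be justifying the asymptotic form of $\phi$, in particular the claim that the remainder really is $O(r^{-2})$ uniformly, including its gradient. For Coulomb scattering, the distorted plane wave in \eqref{ss-coulomb-decomposition} is singular along the forward direction $k\cdot x = |k||x|$, where the logarithm blows up. There the naive $O(r^{-2})$ estimate on derivatives can fail in a shrinking cone. I would handle this by noting that the ingoing parts cancel identically in $\phi$, so the difference has no such term. Any residual nonuniformity is confined to a cone of solid angle shrinking with $R$, and I would check that it contributes $o(1)$ to the flux, assuming (as the decomposition hypothesis implicitly does) that the remainder and its radial derivative are $O(r^{-2})$ off a set of vanishing measure. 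This is consistent with the theorem asserting equality only almost everywhere.
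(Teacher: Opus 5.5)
Your proof is correct and follows the same route as the paper. You take the difference $\phi$, use $\nabla\cdot j_\phi=0$, and evaluate the flux through a large sphere to get $\frac{k}{M}\int d\Omega\,|f_\phi|^2+O(R^{-1})$, which forces $f_\phi=0$ almost everywhere. Your extra checks (the origin, and the $-M\alpha/(kr)$ correction to the phase derivative entering only at $O(R^{-1})$) fill in details the paper leaves implicit, and under the theorem's literal hypothesis the forward-direction issue is already removed because the identical distorted plane waves cancel exactly in $\phi$.
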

\proof Consider the difference between the two states. The ingoing behavior cancels, and what remains is
\begin{equation}
	\phi = \tilde\psi - \psi = 
	\frac{e^{i k x - i \frac{n\alpha}{|k|} \log 2 k |x|}}{|x|} \left[\tilde f \left(\frac{x}{|x|}\right) - f\left(\frac{x}{|x|}\right)\right]
	+ O(r^{-2})
	\text.
\end{equation}
Since each of $\psi$ and $\tilde\psi$ is an eigenstate, the difference $\phi$ is as well. In an eigenstate $\nabla \cdot j_\phi = 0$. We evaluate the integral of this divergence to find
\begin{equation}
	0 = \oint_R d\hat n \cdot j_\phi(x) =
	\frac {|k|} M \oint_{S^2} |f_\phi|^2
	+ O(R^{-1})
	\text.
\end{equation}
Taking the limit $R\rightarrow 0$, this integral can only vanish if $f_\phi \equiv \tilde f - f$ almost everywhere, which completes the proof.
\qed

The stability of scattering in an asymptotically Coulomb potential is provided by the following lemma, analogous to Lemma~\ref{lem:stable}.
\begin{lemma}
	Let $V(x)$ be a short-range potential, $k$ be a momentum, $\alpha$ a real number parameterizing the Coulomb interaction, and $\xi$ be any eigenstate of $\hat H = -\frac{\nabla^2}{2M} + V(\hat x) + \frac{\alpha}{|\hat x|}$ with eigenvalue $E$, for which $\|\xi\|_\infty \equiv \sup_x |\xi(x)|$ is finite. We consider an exact scattering state $\psi(x)$ which obeys $(E-\hat H)\psi = 0$, and a corresponding approximate scattering state $\tilde\psi(x)$, each defined to be of the form of Eq.~(\ref{eq:ss-coulomb-decomposition}).
	The $L_1$ norm of the Schr\"odinger violation of $\tilde \psi$ is bounded by the asymptotic behavior of the difference $\phi(x) \equiv \tilde\psi(x) - \psi(x)$:
	\begin{equation}
		\int |(E-\hat H)\tilde\psi| \,dx
		\ge
		\frac{1}{2 M \|\xi\|_\infty}
		\lim_{R\rightarrow\infty}\left|\oint_R
		\left(\xi^* \nabla_r \phi - \nabla_r \xi^* \phi\right)
		\right|
		\text,
	\end{equation}
	where the surface integral is taken over a sphere of radius $R$.
\end{lemma}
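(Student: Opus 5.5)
The argument follows the proof of Lemma~\ref{lem:stable} essentially verbatim. The Coulomb term affects only the asymptotic form of the states, not the algebraic identities used there.

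First, since $\psi$ is an exact eigenstate, $(E-\hat H)\tilde\psi = (E-\hat H)\phi$ with $\phi = \tilde\psi - \psi$. By \eqref{ss-coulomb-decomposition}, the distorted ingoing waves of $\psi$ and $\tilde\psi$ coincide and cancel in $\phi$. What remains is an outgoing Coulomb wave of order $r^{-1}$ plus $O(r^{-2})$ terms.

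Next I apply Lemma~\ref{lem:holder} with $f = (E-\hat H)\phi$ and test function $g = \xi^*/\|\xi\|_\infty$, which satisfies $|g|\le 1$. This gives
\begin{equation}
\int |(E-\hat H)\tilde\psi| \ge \frac{1}{\|\xi\|_\infty}\left|\int \xi^*(E-\hat H)\phi\right|\text.
\end{equation}
To make sense of the right-hand integral, I write it as the limit of integrals over balls $|x|\le R$.

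On each ball I use that $\xi$ is an eigenstate with the same real $E$, and that both $V$ and $\alpha/|x|$ are real multiplication operators. Then the potential terms cancel pointwise:
\begin{equation}
\xi^*(E-\hat H)\phi = \xi^*(E-\hat H)\phi - \phi\,\bigl((E-\hat H)\xi\bigr)^* = \frac{1}{2M}\nabla\cdot\left(\xi^*\nabla\phi - \nabla\xi^*\,\phi\right)\text.
\end{equation}
The divergence theorem on the ball $|x|\le R$ turns the right-hand side into the surface integral in the statement. Taking $R\to\infty$ completes the argument.

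The main obstacle is justifying these steps in the presence of the Coulomb singularity and the slow decay.

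At the origin, $\alpha/r$ is singular, so the divergence theorem should be applied on the shell $\epsilon\le|x|\le R$. The inner surface integral is $O(\epsilon^2)$ provided $\xi$, $\phi$ and their gradients stay bounded near $0$. This holds for Coulomb eigenfunctions, whose regular solutions behave like $r^\ell$, and I will assume the same regularity for $\tilde\psi$.

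At large radius, I must check two things. First, that $\int_{|x|\le R}\xi^*(E-\hat H)\phi$ converges to the full integral. This follows from dominated convergence, since $|\xi^*(E-\hat H)\phi| \le \|\xi\|_\infty|(E-\hat H)\tilde\psi|$, which is integrable by assumption. Second, that the surface limit exists. It does, because both sides are equal for every $R$, so existence of the left-hand limit implies existence of the right-hand one.

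The logarithmic phases in \eqref{ss-coulomb-decomposition} never need to be handled explicitly in this lemma. They matter only when the surface integral is later evaluated to extract $f$.
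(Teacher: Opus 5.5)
Your proposal is correct and follows the paper's route. The paper says only that the proof is exactly that of Lemma~\ref{lem:stable}: H\"older with test function $\xi^*/\|\xi\|_\infty$, followed by the Green's-identity integration by parts (where the real potentials $V$ and $\alpha/|x|$ cancel) and the divergence theorem. Your shell argument at the Coulomb singularity and your dominated-convergence argument at large $R$, which also shows that the surface limit exists, supply details the paper leaves implicit.
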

The proof of this lemma is exactly as the proof of Lemma~\ref{lem:stable}. From this, the same sorts of theorems result as in the short-range case; however note that we cannot bound the change in the cross section, as the cross section (as usually defined) is infinite. Differential cross sections, and appropriately cut-off total cross sections, can still be bounded. We leave a thorough treatment to future work.

\section{Many-body scattering}\label{sec:many}
In this section we turn our attention to systems of multiple ($>2$) interacting particles. We are concerned with the following sort of scattering event: two bound states interact, and possibly break up into at least three outgoing states, which may be bound or individual particles.

Two-body scattering is, in practice, numerically tractable, in the sense that nearly-exact scattering states may readily be found even for realistic potentials. These states are close enough to exact that there is little practical use for the stability estimates provided above. The same is not true for many-body scattering. In particular, a first-principles calculation of inelastic scattering with three outgoing particles (e.g.~the breakup of a deuteron by an incident nucleon) is already computationally difficult. To the knowledge of these authors, no such calculations are available for strongly coupled, inelastic scattering involving four or more dynamical particles, while all systematics are under control. In short, the general case of inelastic scattering with four or more particles remains an unsolved computational problem.

The case of incident bound states introduces a significant complication into the $L_1$ norm-based method we use in this work. For results obtained up until this point, the asymptotic form of the wavefunction was known \emph{a priori}, and in particular the ingoing wave $e^{ikx}$ was specified exactly. However, when the ingoing states are bound states it can no longer be assumed that their wavefunctions are known exactly. In particular, in any practical calculation, the $L_1$ norm of the scattering state will be infinite, as the small but non-zero error in the ansatz for the bound state will be multiplied by the spatial volume of the calculation. A similar difficulty will occur with any outgoing states that happen to be bound states. These difficulties can be disregarded only if the numerical evaluation of the $L_1$ norm is done ``sloppily'', in such a way as to render negligible the formally infinite contribution from the error in the bound-state wavefunction.

To have a concrete example, consider the scattering of a two-particle bound state off of a central potential. We will consider only energies below the binding energy of the bound state, such that the outgoing state is also guaranteed to be a bound state, and with the same wavefunction. This is therefore an elastic process, but distinct from the problems considered in earlier sections due to the additional dynamical coordinate.

Labelling the bound-state wavefunction $\Psi$, the scattering state for this process may be decomposed in a form analogous to \eqref{ss-decomposition} above:
\begin{equation}
	\psi(x_1, x_2)
	=
	e^{ikx}
	\Psi(x_2 - x_1)
	+
	\frac{e^{ikr}}{r}
	f\left(\frac{x}{r}\right)
	\Psi(x_2 - x_1)
	+ \psi_{\mathrm{bulk}}(x_1,x_2)
	\text.
\end{equation}
Above we have denoted the center-of-mass coordinate (which is not conserved) $x = \frac{x_1 + x_2}{2}$, and $r = |x|$ as usual.
The bulk wavefunction $\psi_{\mathrm{bulk}}$ is required to fall strictly faster than $|x_1|^{-3}$ and $|x_2|^{-3}$.
A sensible ansatz for an approximate scattering state will then take this form, but with variational ans\"atze for the bound-state wavefunction $\Psi$, the bulk behavior $\psi_{\mathrm{bulk}}$, and the asymptotic structure $f$. The $L_1$ norm of such a state will, as mentioned, be infinite. Additionally, there are now two sources of error in the S-matrix element read out from such an approximate scattering state: the error in the ``bulk'' evolution of Schr\"odinger's equation, but also the error in the choice of bound state wavefunction.

Above, the overall asymptotic structure of the scattering state is the same as it is for the case of single-particle scattering (from a central potential). This is no longer the case once we allow for inelastic scattering. In general, accounting for all scattering channels requires a comprehensive list of asymptotic states which are kinematically allowed.

In this section we will first assume that the bound-state wavefunctions are, in fact, known exactly. In the second subsection, we will show how finite error in the bound-state wavefunctions can be folded in, to obtain rigorous estimates on the scattering of approximated bound states. In all cases we will work with Hamiltonians of the form 
\begin{equation}
	\hat H = -\sum_{n=1}^N \frac{1}{2M_n} \nabla^2_n
	+
	\sum_{n_1,n_2} V_{n_1,n_2}(x_{n_1},x_{n_2})
	+
	\sum_{n_1,n_2,n_3} V_{n_1,n_2,n_3}(x_{n_1},x_{n_2},x_{n_3})
	+
	\cdots
	\text.
\end{equation}
Here $V_{n_1,n_2}$ is the two-body potential between particles $n_1$ and $n_2$, $V_{n_1,n_2,n_3}$ is the three-body potential, and so on. It is assumed that $V_\bullet$ is real and decays exponentially with the largest separation among the arguments. For brevity we will summarize these potential terms by a single function $V(x_1,\ldots,x_N)$, deemed a local $N$-body potential precisely when the above decomposition holds.

\subsection{Approximate scattering of exact bound states}\label{sec:exact-approx}
We begin with a concrete example that illustrates the key features of bound-state scattering. Consider a bound state of two distinguishable particles, bound by the effect of an interaction $V_i$, incident on a fixed (short-range) central potential $V_c$, with nonvanishing outgoing amplitudes both for elastic and inelastic scattering. The Hamiltonian of this system is
\begin{equation}\label{eq:H2-inelastic}
	\hat H = \frac{\hat p_1^2}{2M} 
	+ \frac{\hat p_2^2}{2M}
	+ V_i(\hat x_1 - \hat x_2) + V_c(\hat x_1) + V_c(\hat x_2)
	\text.
\end{equation}
The bound-state wavefunction is a complex-valued function of a single vector, the relative coordinate: $\Psi(x_2-x_1)$. Plane waves constructed from this wavefunction are eigenstates of the full Hamiltonian given above. The corresponding scattering state has the form
\begin{equation}
	\psi(x_1,x_2) =
	e^{ik\frac{x_1+x_2}{2}}\Psi(x_1-x_2)
	+
	\psi_{\mathrm{bulk}}(x_1,x_2)
	+
	\psi_{\mathrm{out}}(x_1,x_2)
\end{equation}
The behavior of this state far from the support of $V_c$ is captured in the outgoing wavefunction $\psi_{\mathrm{out}}$. In general (and in particular, if the ingoing momentum is larger than the binding energy), this outgoing behavior now has two channels. In one channel we have elastic scattering, in which there is a single asymptotic state with the two particles still bound. The other channel captures the case of inelastic scattering, and there are two momenta. It is convenient to write these structures in integral form, from which the asymptotic behavior can be extracted via stationary-phase approximation:
\begin{equation}
	\begin{split}
	\psi_{\mathrm{out}}(x_1,x_2)
		= &{}
	\int\! dk\, e^{ikx} f_1(k) \theta(kx) \Psi(x_1-x_2)\\
		&+
	\int\! dk_1\,dk_2\, e^{i(k_1 x_1 + k_2 x_2)} f_2(k_1,k_2)
	\theta(k_1 x_1) \theta(k_2 x_2)
	\text.
	\end{split}
\end{equation}
As before we have used $x$ to denote the center-of-mass coordinate. Integrals are taken only over those $3$-momenta (or $d$-momenta in $d$ spatial dimensions) which respect conservation of energy. As only the asymptotic behavior of $\psi_{\mathrm{out}}$ matters, the Heaviside function $\theta(\cdot)$ can be replaced by any function of the same asymptotics.

The ingoing behavior uniquely determines the outgoing behavior for the same reason as in Theorem~\ref{thm:uniqueness}. Two states $\tilde\psi_1$ and $\tilde\psi_2$, with identical ingoing behavior, have a product $\phi$ which is an eigenstate with no ingoing current. The outgoing current of $\phi$ must therefore also vanish, implying that the outgoing waves of $\psi_1$ and $\psi_2$ are the same.

Now let $\psi$ be an exact scattering state, and $\tilde\psi$ an approximate one with the same ingoing wave. Each of these states is defined on $\mathbb R^3$. For any wavefunction $\xi$, Lemma~\ref{lem:holder} gives the inequality
\begin{equation}
	\|(E-\hat H)\tilde\psi\|_1
	\ge \|\xi\|_\infty^{-1} \int_{\mathbb R^3 \times \mathbb R^3} \xi^* (E - \hat H)\tilde\psi
	\text.
\end{equation}
Connecting the $L_1$ norm of the Schr\"odinger violation to the asymptotic error of $\tilde\psi-\psi$ is done by integrating by parts, and restricting $\xi$ to be an eigenstate, so that only boundary terms remain. In the case of two particles, the integral evaluates to
\begin{equation}
	\begin{split}
	2 M \|\xi\|_\infty
	\|(E - \hat H)\tilde\psi\|_1
		\ge {}&
	\lim_{R\rightarrow\infty}\int_{\mathbb R^3} dx_2
	\oint_{S^2_1}
	(\xi^* \nabla_1 \phi - \nabla_1 \xi^* \phi)\\
		&+
	\int_{\mathbb R^3} dx_1
	\lim_{R\rightarrow\infty}\oint_{S^2_2}
	(\xi^* \nabla_2 \phi - \nabla_2 \xi^* \phi)
	\text.
	\end{split}
\end{equation}
Above, $S^2_1$ and $S^2_2$ respectively denote the celestial spheres of the first and second particles. The corresponding integrals are evaluated at a radius $R$.
This calculation provides a starting point for proofs of stability of cross sections and phase shifts, just as in the case of a single particle scattering from a central potential.

We now turn to the general case. Let $V(x_1,\ldots,x_N)$ be a local $N$-body potential, so that the Hamiltonian of $A = A_1 + A_2$ particles is
\begin{equation}\label{eq:H-nbody}
	\hat H =
	\sum_{a=1}^A \frac{1}{2 M_a} \hat \nabla_a^2
	+
	\sum_a V(x_{a_1},\ldots,x_{a_N})
	\text.
\end{equation}
The second sum is taken over all sets of $N \le A$ particles, with indices $a_1,\ldots,a_N$. The generalization of Lemma~\ref{lem:stable} is as follows.
\begin{lemma}\label{lem:stable-bound}
	Let $\hat H$ be a corresponding Hamiltonian of the form of \eqref{H-nbody}, $k$ be a momentum, and $\xi$ be any eigenstate of $\hat H$ with eigenvalue $E$ for which $\|\xi\|_\infty$ is finite. Then given an exact scattering state $\psi$ and an approximate scattering state $\tilde\psi$, the $L_1$ norm of the Schr\"odinger violation of $\tilde\psi$ bounds the difference in their asymptotic behaviors according to:
	\begin{equation}
		\sum_a
		\int_{\mathbb R^{3(A-1)}} \lim_{R\rightarrow\infty}\oint_R (\xi^*\nabla_a\phi - \nabla_a \xi^*\phi)
		\le
		2 M \|\xi\|_\infty \|(E-\hat H)\tilde \psi\|_1
		\text.
	\end{equation}
\end{lemma}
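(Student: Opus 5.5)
The plan follows the proof of Lemma~\ref{lem:stable} essentially line by line, applied on the configuration space $\mathbb R^{3A}$. Set $\phi \equiv \tilde\psi - \psi$. Because $\psi$ is an exact eigenstate, $(E-\hat H)\phi = (E-\hat H)\tilde\psi$. Lemma~\ref{lem:holder} with test function $g = \xi^*/\|\xi\|_\infty$ then gives
\begin{equation}
	\|\xi\|_\infty \|(E-\hat H)\tilde\psi\|_1 \ge \left|\int_{\mathbb R^{3A}} \xi^* (E-\hat H)\phi\right|
	\text.
\end{equation}
The remaining work is to show that the right-hand side equals exactly the sum of boundary terms in the statement, divided by $2M$. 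Throughout I take equal masses $M_a = M$, as the statement implicitly does. With unequal masses one simply carries a factor $1/(2M_a)$ inside the sum.

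Next I use that $\xi$ is an eigenstate: $\xi^*(E-\hat H)\phi = \xi^*(E-\hat H)\phi - \phi\,[(E-\hat H)\xi]^*$. The potential $V$ is real and local, so the $V$ terms cancel pointwise. Only the kinetic terms survive, and each can be written as a divergence in its own coordinate:
\begin{equation}
	\xi^*(E-\hat H)\phi = \frac{1}{2M}\sum_a \nabla_a\cdot\left(\xi^*\nabla_a\phi - \nabla_a\xi^*\,\phi\right)
	\text.
\end{equation}
For each $a$, I would write the integral over $\mathbb R^{3A}$ as an integral over the other $3(A-1)$ coordinates of an integral over $x_a \in \mathbb R^3$. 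The inner integral is then converted by the divergence theorem into $\lim_{R\to\infty}\oint_R$ over the sphere $|x_a|=R$. Summing over $a$ yields the left-hand side of the lemma. Since that left-hand side is then equal to $2M\int\xi^*(E-\hat H)\phi$, it is bounded by the absolute value of this quantity, and the inequality follows.

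The main obstacle is justifying the interchange of the limit $R\to\infty$ with the integral over the remaining coordinates. It must also be checked that no other boundary contributions appear, for instance from regions where $x_a$ is bounded but some other coordinate is large. Two facts make this manageable. First, each divergence term involves only $\nabla_a$, so Fubini reduces it to a family of three-dimensional divergence-theorem applications with no cross boundaries. Second, the integral $\int \xi^*(E-\hat H)\phi$ converges absolutely, because $\xi$ is bounded and the Schrödinger violation has finite $L_1$ norm. I would define the truncated integral over $|x_a|\le R$ and take $R\to\infty$ using dominated convergence. Doing this requires assuming that $\phi$, like $\psi$ and $\tilde\psi$, has the channel-decomposed asymptotic form described above, so that $\phi$ and $\nabla\phi$ are polynomially bounded. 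Under that assumption the surface integrands are integrable in the other coordinates uniformly in $R$. I would state these regularity conditions explicitly as hypotheses on the admissible scattering states rather than attempting a fully general justification.
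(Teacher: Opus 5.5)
Your overall route (H\"older with test function $\xi^*/\|\xi\|_\infty$, then Green's identity using that $\xi$ is an eigenstate and $V$ is real and local) is the one the paper intends; it gives nothing beyond the integrate-by-parts sketch before the lemma. The gap is in the step you flagged and then tried to settle with a regularity hypothesis. That step passes from $\int_{\mathbb R^{3A}}\sum_a\nabla_a\cdot J_a$, with $J_a=\xi^*\nabla_a\phi-\nabla_a\xi^*\,\phi$, to $\sum_a\int_{\mathbb R^{3(A-1)}}\lim_{R\to\infty}\oint_{|x_a|=R}\hat x_a\cdot J_a$. Only the sum $\sum_a\nabla_a\cdot J_a=2M\xi^*(E-\hat H)\phi$ is absolutely integrable; the individual terms are not. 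So the per-coordinate Fubini is unavailable, and no polynomial bound on $\phi,\nabla\phi$ gives an $R$-uniform majorant in the remaining coordinates.

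Concretely, take the two-particle problem with a fixed central potential. Use the elastic-channel pieces $\xi^*\supset e^{-ik\cdot X}\Psi^*(\rho)$ and $\phi\supset |X|^{-1}e^{ik|X|}f^{(\phi)}(\hat X)\Psi(\rho)$, with $X$ the centre of mass and $\rho=x_2-x_1$.
\begin{itemize}
\item Since $\nabla_{1,2}=\frac12\nabla_X\mp\nabla_\rho$, each $\nabla_a\cdot J_a$ contains a cross term of size $\sim k|f^{(\phi)}|\,|X|^{-1}|\Psi||\nabla_\rho\Psi|$. This is not integrable over $X\in\mathbb R^3$, and it cancels only between $a=1$ and $a=2$.
\item Worse, at fixed $x_2$ the factor $\Psi(x_2-x_1)$ decays exponentially as $|x_1|\to\infty$. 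Hence $\lim_R\oint_{|x_1|=R}\hat x_1\cdot J_1=0$ for every $x_2$, and likewise for $a=2$.
\item The flux comes from configurations where both particles sit near radius $R$ together. For $R\approx|x_2|$ the surface integral at fixed $x_2$ is of order $|x_2|^{-1}$, so any $R$-uniform majorant is at least $\sim|x_2|^{-1}$, which is not integrable on $\mathbb R^3$.
\end{itemize}
So the left-hand side with the limit inside discards the elastic-channel pairing entirely. That is exactly the term ($\propto k\int d\Omega\,|f^{(\phi)}|^2$ for the cross-section choice of $\xi$) the lemma is meant to control, while $\int\xi^*(E-\hat H)\phi$ does not vanish. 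Your asserted identity between the stated left-hand side and $2M\int\xi^*(E-\hat H)\phi$ is therefore false.

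The repair is to cut off all coordinates at the same scale and keep the limit outside. On $D_R=\{x:\ |x_b|\le R \text{ for all } b\}$, apply the divergence theorem in $\mathbb R^{3A}$ to the full field $(J_1,\ldots,J_A)$. The boundary of $D_R$ is the union of the faces $\{|x_a|=R\}\times\prod_{b\ne a}\bar B_R$, with outward normal $\hat x_a$ in the $a$-th block, so
\[
2M\int_{D_R}\xi^*(E-\hat H)\phi=\sum_a\int_{\prod_{b\ne a}B_R}\oint_{|x_a|=R}\hat x_a\cdot J_a .
\]
Dominated convergence is now needed only on the left. It is legitimate there because $|\xi^*(E-\hat H)\phi|\le\|\xi\|_\infty|(E-\hat H)\tilde\psi|$. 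Letting $R\to\infty$ gives the bound with $\lim_{R\to\infty}$ outside the $3(A-1)$-dimensional integral (now over $B_R^{A-1}$) and an absolute value on the left. No hypotheses on $\phi$ are needed beyond what the divergence theorem on a bounded domain requires. This is the form the stability arguments actually use, and the lemma should be read and proved that way.
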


Similar to the results of Lemma~\ref{lem:bounded}, scattering eigenstates in the presence of ingoing (or outgoing) bound states have bounded $L_\infty$ norms, given weak assumptions on the behavior of the potential. In the same manner as Lemma~\ref{lem:stable}, Lemma~\ref{lem:stable-bound} can be used to show that scattering properties are stable with respect to the $L_1$ norm of the approximate scattering state.

\subsection{Approximate scattering of approximate bound states}\label{sec:approx-approx}
In this section we briefly consider the case where the wavefunction of the desired ingoing bound state is not exactly known. We will sketch the central ideas by considering the case of elastic scattering of a two-body bound state from a central potential, as described by the Hamiltonian \eqref{H2-inelastic}. A more thorough treatment, along with the generalization to full inelastic scattering, is left for future work.

For brevity we will write this Hamiltonian as $\hat H = \hat H_0 + \hat K_{\mathrm{cm}} + \hat V_c$, with $\hat V_c$ containing the two terms coming from the central potential. The operator $\hat K_{\mathrm{cm}}$ contains the kinetic energy coming from center-of-mass motion. Therefore $\hat H_0$ has the property that the ingoing bound-state plane wave is an exact eigenstate. The energy of this eigenstate is the binding energy $E_b < 0$.

We denote the true bound-state wavefunction $\Psi$ as before, and the known approximation to the bound-state wavefunction is $\tilde\Psi$. As discussed above, the $L_1$ norm of the approximate scattering state is infinite, with divergent contributions both from the ingoing plane wave and the outgoing spherical wave. Concretely, decomposing the approximate scattering state in the usual way,
\begin{equation}
	\tilde \psi = \tilde \psi_{\mathrm{in}} + \tilde \psi_{\mathrm{bulk}} + \tilde\psi_{\mathrm{out}}
	\text,
\end{equation}
both $\tilde\psi_{\mathrm{in}}$ and $\tilde\psi_{\mathrm{out}}$ have the undesirable property that their image under $E-\hat H$ has infinite support.

The ingoing and outgoing waves are determined by $\tilde\Psi$ and the S-matrix amplitudes $\tilde f$:
\begin{equation}
	\tilde\psi_{\mathrm{in}}(x_1,x_2) = e^{ikx} \tilde\Psi(x_1-x_2)
	\text{ and }
	\tilde\psi_{\mathrm{out}}(x_1,x_2) = \frac{e^{ikr}}{r} \tilde f\left(\frac{x}{r}\right) \tilde\Psi(x_1-x_2) + O(r^{-2})\text.
\end{equation}
As in the one-body case, the precise definition of $\tilde \psi_{\mathrm{out}}$ is ambiguous, because terms can be shifted into $\tilde \psi_{\mathrm{bulk}}$ without changing the overall wavefunction.

 The $L_1$ norm, ordinarily thought of as a functional of $\tilde\psi$ itself, can also be considered a functional of $\tilde\psi_{\mathrm{bulk}}$, $\tilde f$, and the approximate bound-state wavefunction $\tilde\Psi$, as those ingredients are sufficient to uniquely construct a scattering state $\tilde\psi$ by applying the above definitions. Consider evaluating this functional at $\tilde\Psi = \Psi$. That is, take the $L_1$ norm of the Schr\"odinger violation of
\begin{equation}
	\tilde\psi_\Psi(x_1,x_2) =
	e^{ikx} \Psi(x_1-x_2)
	+
	\tilde\psi_{\mathrm{bulk}}(x_1,x_2)
	+
	\frac{e^{ikr}}{r} \tilde f\left(\frac{x}{r}\right) \Psi(x_1-x_2)\text.
\end{equation}
We will denote this $L_1$ norm $L[\tilde\psi_{\mathrm{bulk}},\tilde f] = \|(E-\hat H)\tilde \psi_\Psi\|_1$. This object is finite. In fact this is the $L_1$ loss that we would have evaluated if the true bound-state wavefunction was exactly known. Minimizing $L$ reveals the exact scattering state, and the stability theorems developed previously apply, so that $L[\cdot,\tilde f]$ being small implies that $\tilde f$ is close to the true $S$-matrix elements.

This construction is not immediately useful, because without knowledge of $\Psi$, there is no way to evaluate $L$. We now define a different modification of the $L_1$ norm, also finite, which can be evaluated by using $\tilde\Psi$. Loosely speaking, this is done by removing from the Schr\"odinger violation terms which, like $(E_b - \hat H_0)\tilde\Psi$, would vanish if $\tilde\Psi$ were equal to the exact bound state. The resulting functional, denoted $\tilde L$, is defined by
\begin{equation}
	\tilde L[\tilde\psi_{\mathrm{bulk}},\tilde f]
	=
	\int \left|
	(E - \hat H) \tilde \psi
	- (E_b - \hat H_0) \tilde \psi_{\mathrm{in}}
	- (E_b - \hat H_0) \tilde \psi_{\mathrm{out}}
	\right|
	\text.
\end{equation}
This is a very poor approximation to the original $L_1$ loss: the difference is infinite. However we will now show that $\tilde L$ can be a very good approximation to the ``ideal'' functional $L$, becoming equal when $\tilde\Psi = \Psi$.

Both $L$ and $\tilde L$ are defined as the integral of the absolute value of a sum of three terms. We will name these terms $\Lambda$ and $\tilde \Lambda$ with appropriate subscripts; thus $L = \int |\Lambda_{\mathrm{in}} + \Lambda_{\mathrm{bulk}} + \Lambda_{\mathrm{out}}|$ and $\tilde L$ is expanded similarly. The bulk wavefunction $\tilde\psi_{\mathrm{bulk}}$ enters identically in both expressions: $\Lambda_{\mathrm{bulk}} = \tilde \Lambda_{\mathrm{bulk}}$. The contributions of the ingoing plane waves are
\begin{equation}
	\Lambda_{\mathrm{in}}
	= -\hat V_c e^{ikx} \Psi(x_1 - x_2)
	\quad\text{and}\quad
	\tilde\Lambda_{\mathrm{in}} = (- \hat V_c) e^{ikx} \tilde\Psi(x_1 - x_2)
	\text,
\end{equation}
and therefore are related by $\tilde \Lambda_{\mathrm{in}} - \Lambda_{\mathrm{in}} = \hat V_c e^{ikx} (\Psi - \tilde \Psi)$. Similarly, the contributions from the outgoing waves are
\begin{equation}
	\begin{split}
	\Lambda_{\mathrm{out}}
		&= \left(
	E - E_b - \hat K_{\mathrm{cm}} - \hat V_c
		\right)
		\left(\frac{e^{i|k\|x|}}{|x|} \tilde f \left(\frac{x}{|x|}\right) + O(|x|^{-2})\right)
		\Psi(x_1-x_2)
		\text{ and}\\
	\tilde\Lambda_{\mathrm{out}}
		&= \left(
		E - E_b - \hat K_{\mathrm{cm}} - \hat V_c
		\right)
		\left(\frac{e^{i|k\|x|}}{|x|} \tilde f \left(\frac{x}{|x|}\right) + O(|x|^{-2})\right)
		\tilde\Psi(x_1-x_2)
	\text.
	\end{split}
\end{equation}
The center-of-mass Schr\"odinger violation operator, $E - E_b - \hat K_{\mathrm{cm}}$, does not annihilate the outgoing wavefunction, but instead yields terms which are higher-order in $|x|^{-1}$. The order of these terms depends on the choice of what terms to include $\tilde \psi_{\mathrm{out}}$. 
In general, for some integer $\alpha$, the difference between the two $L_1$ norms may be bounded as
\begin{equation}
	|\tilde L - L|
	\le
	\int \left|
	\hat V_c (\tilde \psi_{\mathrm{in}} - \psi_{\mathrm{in}})
	\right|
	+
	\int \left|
	\hat V_c (\tilde \psi_{\mathrm{out}} - \psi_{\mathrm{out}})
	\right|
	+
	\int \left|
	(\tilde \Psi(x_1-x_2) -
	\Psi(x_1-x_2))
	O(|x|^{-\alpha})
	\right|
	\text.
\end{equation}
This is finite as long as sufficiently high-order terms were included in the outgoing wavefunction, so as to make $\alpha \ge 4$. The right-hand side above can in turn be connected to measures of the error in the bound-state wavefunction, such as $\|\tilde\Psi - \Psi\|_1$. 

This bound allows the previous stability theorems to be applied even to scattering state for which the asymptotic wavefunctions are known only approximately. The stability theorems bound the asymptotic error in an approximate wavefunction in terms of the functional $L$, and the above results show that $L$ can be approximated by, and its value bounded in terms of, $\tilde L$.

\section{Discussion}\label{sec:discussion}

We have presented a minimum principle for scattering states; namely, that the scattering state is yielded by minimizing the $L_1$ norm of the Schr\"odinger violation $(E - \hat H) \tilde \psi$ over a class of wavefunctions $\tilde\psi$ constrained to obey certain boundary conditions. Furthermore we have shown that the (maximum possible) error in the asymptotic behavior of a scattering state $\tilde\psi$ is bounded by $C \int |(E-\hat H)\tilde\psi|$. The coefficient $C$ is potential-specific, and we are able to provide estimates only for a limited class of potentials. General estimates of this coefficient would also imply general bounds on scattering cross sections. This variational principle extends to momentum-dependent potentials (\secref{momentum}), long-range Coulomb-like potentials (\secref{coulomb}), and many-body scattering (\secref{many}).

There are notions of approximate scattering states that do not correspond to having small $L_1$ norm, and the bounds provided here are not necessarily close to tight for such approximations. As a simple example, consider in one dimension, the difference between the scattering states for potentials $V$ and $\tilde V$, defined as follows
\begin{equation}
	V(x) = \delta(x) \text{ and } \tilde V(x) = \delta(x-1)\text.
\end{equation}
Holding the incident momentum fixed, label the corresponding scattering states $\psi(x)$ and $\tilde\psi(x)$. Now $\tilde\psi$ may be considered an approximate scattering state for the potential $V(x)$, but the $L_1$ norm of the violation is not small. Nevertheless $\tilde\psi$ is in one sense an excellent approximation: the transmission coefficient read off from this approximate state is exactly correct!

An elaboration of this idea is given by Luty~\cite{Luty:RenormalizationNotes}. For asymptotically small incident momenta, the S-matrix provides no information about the structure of a potential other than its integral, $\int V(x)$. Thus the potential can be rearranged arbitrarily, as long as the integral is held fixed, without substantially modifying low-energy scattering. Now define, for any potential $V(x)$, an approximate scattering state $\tilde\psi$, as the \emph{exact} scattering state for the corresponding Dirac-$\delta$ potential:
\begin{equation}
	\tilde V(x) = \left(\int d^3 x\,V(x)\right)\delta(x)
	\text.
\end{equation}
At low incident momenta the state $\tilde\psi$ provides an excellent approximation to the S-matrix, but this fact cannot be seen from the $L_1$ norm of the Schr\"odinger violation, which may be large.

We have not discussed relativistic scattering states. In a gapped theory, these can be understood in essentially the same way that we treated many-body scattering. Each slice of Fock space which is expected to contribute can be parameterized in the same way, and the variational state optimized according the $L_1$ norm. We leave the proper treatment of this topic to future work, anticipating that some modification will be needed to accomodate the altered dispersion relation. We believe a sensible formulation of stable minimum principles for ungapped scattering states---where no finite number of sectors of Fock space is likely to be a good approximation---to be an open problem.

Decay of unstable bound states can be seen as a special sort of inelastic scattering. Decay is a $1\rightarrow N$ process instead of a $2\rightarrow N$ process, and more importantly has dramatically different boundary conditions, in which the outgoing waves must have complex momentum. The results derived in this work do not appear to have any straightforward analogues to the case of decay states, and any generalization is left to future work.

\section*{Acknowledgments}

For many insightful conversations on scattering and variational methods, we are grateful to Tanmoy Bhattacharya, Tom Cohen, Duff Neill, and Ingo Tews.

S.L.~is supported by a Richard P.~Feynman fellowship from the LANL LDRD program, and Y.Y.~is supported by a Darleane C.~Hoffman fellowship from the LANL LDRD program. Los Alamos National Laboratory is operated by Triad National Security, LLC, for the National Nuclear Security Administration of U.S. Department of Energy (Contract No.~89233218CNA000001).

\printbibliography

\appendix
\section{Examples and counterexamples}\label{app:counterexamples}
In this appendix we collect a few pathological approximate scattering states, which have served as useful landmarks in constructing the results in the body of this paper. We include them here both to justify choices made in this work, and as a guide to those who wish to extend our results.

\paragraph{Justifying the $L_1$ norm} Our first counterexample is intended to explain why we are interested in the $L_1$ norm of the Schr\"odinger violation $(E-\hat H)\tilde\psi$, rather than the $L_2$ norm. We will construct a sequence of states $\Psi_n$, for which the $L_2$ norm of the Schr\"odinger violation $\int |(E-\hat H)\Psi_n|^2$ is arbitrarily close to $0$, but $\int \nabla \cdot j$ remains of order unity. For these examples we select $\hat H$ to be the free Hamiltonian.

First define the triangle function (although the precise form doesn't much matter):
\begin{equation}
	T(x) = \begin{cases}
		0 & x < 0\\
		x & x \in [0,1]\\
		1 & x > 1
		\text.
	\end{cases}
\end{equation}
Now take $\Psi_n$ to be a state which changes amplitude, from $A_<$ as $x \rightarrow -\infty$ to $A_>$ as $x \rightarrow \infty$, with the transition defined by $T(x/n)$.
\begin{equation}
	\Psi_n(x) = \theta(-x) A_< e^{ikx} + T(x/n) A_> e^{ikx}
	\text.
\end{equation}
The integral of the divergence of the current is $A_> - A_<$, which is independent of $n$. The $L_2$ norm of the Schr\"odinger violation is proportional to $n^{-1}$, and can be made arbitrarily small while holding $A_> - A_<$ fixed.

The motivation for this counterexample is that $\int \nabla \cdot j$ measures the total amount of probability being created. For the example here, $\Psi$ looks much like a scattering state, but with an apparently nonunitary S-matrix.

The existence of this counterexample is connected to the fact that we are measuring the standard deviation of $(E-\hat H)$ in an ungapped system, and also to the fact that the $L_2$ norm is quadratic in $\psi$. Some variational principles (see the brief review in \secref{prior}) are based instead on the expectation value of $(E - \hat H)$. A similar counterexample also shows that such variational principles are not stable: $\langle\Psi|(E-\hat H)|\Psi\rangle$ can be made arbitrarily small while the S-matrix remains arbitrarily wrong. Here we need only ensure that $(E-\hat H)\psi$ be orthogonal, or nearly orthogonal, to $\psi$. One possible sequence of states is
\begin{equation}
	\Psi_n(x) = \theta(-x) A_< \sin k x + T(\alpha x/n) A_> \sin k x
	\text.
\end{equation}
At sufficiently large $n$, the parameter $\alpha$ (of order unity) can be tuned to make the expectation value of $E - \hat H$ exactly vanish. Note that there's no need to take a wavefunction which is nontrivial over an arbitrarily large region. This result is unsurprising: the spectrum of $\hat H$ includes infinitely many states with energies both above and below $E$, and so there should be no shortage of superpositions satisfying $\langle \hat H \rangle = E$.

\paragraph{Large probability nonconservation from small Schr\"odinger violation} This second counterexample concerns probability nonconservation in approximate eigenstates. The fact that probability is conserved in an eigenstate is central to the scattering uniqueness theorem (Theorem~\ref{thm:uniqueness}). The reader may wonder why strengthening the theorem to account for a small violation of Schr\"odinger's equation required such a dramatic change in the proof, including the abandonment of any discussion of $\nabla\cdot j$. It is precisely because of the following construction, of a sequence of states for which the $L_1$ norm of the Schr\"odinger violation is arbitrarily small, but the total probability nonconservation $\int \nabla \cdot j$ is large.

Define the two-parameter family of wavefunctions
\begin{equation}
	\Psi_{\epsilon,A}(x) = i A \theta(-x) \sin k x
	+ T(x)\left[i A \sin k x + \epsilon \cos k x\right]
	\text.
\end{equation}
The current at $x < 0$ vanishes for the usual reason. The Schr\"odinger violation, coming entirely from the interval $[0,1]$, is proportional to $\epsilon$ and independent of $A$. The current at $x > 1$ is
\begin{equation}
	j(x>1) = \frac{\Im \left(-i A \sin k x + \epsilon \cos k x\right)k \left(i A \cos k x - \epsilon \sin k x\right)}{M}
	= \frac{\epsilon k A}{M}
	\text.
\end{equation}
By taking $A = \epsilon^{-1}$, we obtain large probability nonconservation while the Schr\"odinger violation is arbitrarily small.

\paragraph{The relevance of pointwise bounds} Our third (counter)example is intended to provide some intuition for the relevance of the $L_{\infty}$ norm of the exact scattering state. When this norm is large, the constant $C$ that appears in Lemma~\ref{lem:stable} (and analogous subsequent lemmas) is correspondingly large, resulting in weak bounds on the error of approximate scattering states. At least in one example, this is not a technical limitation of the method of proof used, but rather a real fact about approximate scattering states in certain potentials.

\begin{figure}
	\centering
	\includegraphics[width=0.7\textwidth]{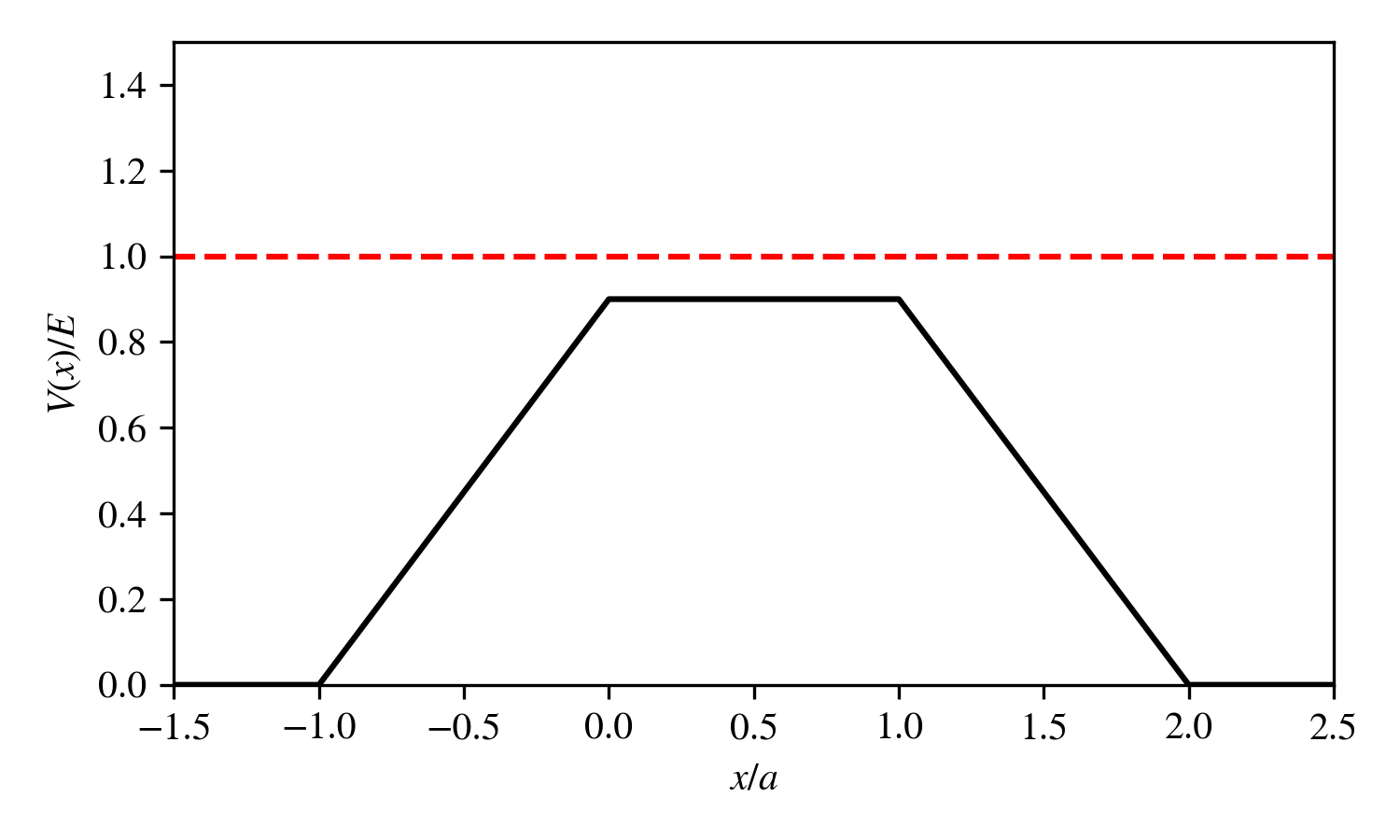}
	\caption{A schematic of the potential defined by (\ref{eq:V-slow}). The ingoing particle ``slows down'' dramatically in the interval $x \in (0,a)$, where $E-V(x)$ is small, and the scattering wavefunction $\psi(x)$ is correspondingly large there.\label{fig:V-slow}}
\end{figure}

For this example we consider the scattering of a particle with energy $E$ off of a potential whose maximum is very close to, and just below, $E$. The potential, also depicted in Figure~\ref{fig:V-slow}, is:
\begin{equation}\label{eq:V-slow}
	V(x;a,\epsilon) = \begin{cases}
		(1-\epsilon)\left(\frac x a+1\right) & x \in [-a,0]\\
		1-\epsilon & x \in (0,a)\\
		(1-\epsilon)\left(2-\frac x a\right)& x \in [a,2a]\\
		0 & \text{elsewhere.}
	\end{cases}
\end{equation}
This potential is parameterized by two quantities, $\epsilon$ and $a$, with dimensions of energy and length respectively. For fixed $\epsilon > 0$, we can always take $a$ sufficiently large so that scattering may be analyzed semiclassically. Focusing on the interval $[0,a]$, the scattering state with ingoing wave $\psi(x<-a) = e^{i\sqrt{2ME}x}$ obeys
\begin{equation}
	\psi(x) = A e^{i\theta} e^{i\sqrt{2M\epsilon}x} \text{ where } x\in[0,a]\text,
\end{equation}
where $A$ and $\theta$ are both real. The overall phase $\theta$ does not concern us. The amplitude $A$ may be determined by probability conservation. The velocity of the particle (of mass $M$) outside of the potential is $\sqrt{2E/M}$; where the potential is maximized, the velocity is $\sqrt{2\epsilon/M}$. The ratio of probability densities is therefore $\sqrt{\epsilon/E}$, and so we find $A = (E/\epsilon)^{1/4}$.

Now we consider approximate scattering states, and in particular states with the same ingoing wave as above $(\phi(x<-a) = e^{i\sqrt{2M E}x})$, but no outgoing wave $(\phi(x>2a) = 0)$. We will place all of the Schr\"odinger violation on the interval $[0,a]$, so that our approximate scattering state has the form
\begin{equation}
	\tilde\psi(x) = \begin{cases}
		\psi(x) & x < 0\\
		f(x) & x \in [0,a]\\
		0 & x > a
		\text.
	\end{cases}
\end{equation}
The choice of transition function $f(x)$ is now fairly arbitrary; for ease of computation we choose
\begin{equation}
	f(x)
	= \frac{a-x}{a}\left(\frac E \epsilon\right)^{-1/4} e^{i \sqrt{2 M \epsilon} x}
	\text.
\end{equation}
The $L_1$ norm of the Schr\"odinger violation has contributions from each of the two kinks, at $x=0$ and $x=a$, where the transition function $f$ turns on and off. These contributions are proportional to $a^{-1}$, and can therefore be made arbitrarily small at fixed $\epsilon$. In this limit the dominant contribution to the Schr\"odinger violation comes from the integral over $(0,a)$:
\begin{equation}
	\int_0^a |(E-\hat H)f(x)| \,dx
	=
	\left(\frac{E}{\epsilon}\right)^{1/4}\displaystyle\int_0^a \left|\left(\epsilon + \frac{\partial^2}{2M}\right)\frac{a-x}{a} e^{i\sqrt{2M\epsilon}x}\right|\, dx
	= \sqrt{\frac 2 M} E^{1/4} \epsilon^{1/4}
	\text.
\end{equation}
This contribution is $a$-independent and cannot be reduced by spreading the potential out. Critically, as $\epsilon$ grows small, we find that the $L_1$ norm of the violation also grows small, with precisely the power-law decay predicted from the $L_\infty$ norm of the exact scattering state $\psi$ above.

\paragraph{Scattering states without pointwise bounds} Finally, we address the conditions under which the scattering states of a Hamiltonian cannot be pointwise bounded; that is, the conditions under which Lemma~\ref{lem:bounded} does not hold. In one dimension, define a potential on the positive real line $\mathbb R_{++}$:
\begin{equation}
	V(r) =\alpha r^{-2} 
	\text.
\end{equation}
For a particle of mass $m$, this potential has notably pathological behavior for $\alpha < - \frac 1 {8m}$, including a spectrum unbounded from below~\cite{Essin:2006sic}.

We take as the ansatz for an eigenstate $\psi = r^\beta (1 + O(r))$. Note that we are not concerned with requiring that $\psi$ be square-integrable; scattering states are, by construction, never square-integrable. At leading order in $r$, Schr\"odinger's equation is satisfied as long as
\begin{equation}
	2 m \alpha = \beta(\beta - 1)
	\text.
\end{equation}
This is true for any energy $E$---only higher-order terms in $r$ depend on the energy of the state. However, for all $\alpha > 0$ there exist eigenstates which have divergences at $r=0$, and therefore fail to be pointwise bounded.

Similar calculations show that this form of potential is not problematic for any exponent other than $-2$. In particular, Yukawa and Coulomb potentials have an exponent of $-1$ and do not exhibit this sort of behavior.

\end{document}